\newif\ifextended
\newif\ifbadges
\newtheorem{theorem}{Theorem}[section]
\newtheorem{lemma}[theorem]{Lemma}
\newif\ifsilent
\newcommand{\anote}[1]{}
\newcommand{\ynote}[1]{}
\newcommand{\mnote}[1]{}
\newenvironment{revision}{}{}
\newcommand{\anote}[1]{\textcolor{red}{Amit: #1}}
\newcommand{\ynote}[1]{\textcolor{blue}{Yossi: #1}}
\newcommand{\mnote}[1]{\textcolor{magenta}{Moshe: #1}}
\newenvironment{revision}{\color{cyan}}{}
\newcommand{\ignore}[1]{}
\titlespacing{\paragraph}{%
  0pt}{
  0.5\baselineskip}{
  1em}
\newcommand{\xmark}{\ding{55}}
\pgfplotsset{compat=1.17}
\pgfplotsset{compat=1.18}
\newcommand*\circled[1]{\protect\tikz[baseline=(char.base)]{
            \protect\node[shape=circle,draw,inner sep=1pt] (char) {#1};}}
\newcommand{\exttext}[2]{\ifextended{#2}\else{#1}\fi}
\newcommand{\extref}[1]{\exttext{\cite[\S A]{tcp-sport-tracking-extended}}{\autoref{#1}}}
\begin{document}

\date{}

\title{Device Tracking via Linux's New TCP Source Port Selection Algorithm
\ifextended
\\
(Extended Version)
\thanks{This is an extended version of a paper with the same name that will be presented in the \nth{32} Usenix Security Symposium (USENIX 2023).}
\fi
}

\author{
{\rm Moshe Kol}
\\
Hebrew University of Jerusalem
 \and
 {\rm Amit Klein}
\\
Hebrew University of Jerusalem
 \and
 {\rm Yossi Gilad}
 \\
Hebrew University of Jerusalem
}

\date{\vspace{-\baselineskip}}
\maketitle

\begin{abstract}
We describe a tracking technique for Linux devices, exploiting a new TCP source port generation mechanism recently introduced to the Linux kernel. This mechanism is based on an algorithm, standardized in RFC 6056, for boosting security by better randomizing port selection. Our technique detects collisions in a hash function used in the said algorithm, based on sampling TCP source ports generated in an attacker-prescribed manner. These hash collisions depend solely on a per-device key, and thus the set of collisions forms a device ID that allows tracking devices across browsers, browser privacy modes, containers, and IPv4/IPv6 networks (including some VPNs). It can distinguish among devices with identical hardware and software, and lasts until the device restarts. 

We implemented this technique and then tested it using tracking servers in two different locations and with Linux devices on various networks. We also tested it on an Android device that we patched to introduce the new port selection algorithm. 
The tracking technique works in real-life conditions, and we report detailed findings about it, including its dwell time, scalability, and success rate in different network types. We worked with the Linux kernel team to mitigate the exploit, resulting in a security patch introduced in May 2022 to the Linux kernel, and we provide recommendations for better securing the port selection algorithm in the paper. 
\end{abstract}

\widowpenalty=1000
\clubpenalty=1000

\section{Introduction}
Online browser-based device tracking is a widespread practice, employed by many Internet websites and advertisers. It allows identifying users across multiple sessions and websites on the Internet. 
A list of motivations for web-based device tracking (fingerprinting) is listed in \cite{Acar:2013:FDW:2541806.2516674}  and includes ``fraud detection, protection against account hijacking, anti-bot and anti-scraping services, enterprise security management, protection against DDOS attacks, real-time targeted marketing, campaign measurement, reaching customers across devices, and limiting the number of accesses to services''.

Device tracking is often performed to personalize ads or for surveillance purposes. It can either be done by sites that users visit or by third-party companies (e.g. advertisement networks) which track users across multiple websites and applications (``cross-site tracking''). Traditionally, cross-site tracking was implemented via \nth{3} party cookies. However, nowadays, users are more aware of the cookies' privacy hazards, and so they use multiple browsers, browser privacy mode, and cookie deletion to avoid such tracking. In addition, support for \nth{3} party cookies in major browsers is being withdrawn due to privacy concerns~\cite{safari-3rd-party-cookies,chrome-3rd-party-cookies}. Trackers are, therefore, on the look for new tracking technologies, particularly ones that can work across sites and across browsers and privacy modes, thereby breaking the isolation the latter attempt to provide. 

Probably the most alarming impact of device tracking is the degradation of user privacy -- when a user's device can be tracked across network changes, different browsers, VPNs, and browser privacy modes. This means that users who browse to one site with some identity (e.g., user account), then browse to another site, from another browser, another network (or VPN), and perhaps at another time altogether, using a completely different and unrelated second identity, may still have the two identities linked. 

Often, device tracking techniques are used in a clandestine manner, without the user's awareness and without obtaining the user's explicit consent. 
This motivates researchers to understand the challenges of device tracking, find new tracking techniques that can be used without consent, and work with the relevant software vendors to eliminate such techniques and raise awareness of these new kinds of attacks.

In this paper, we present a new browser-based tracking technique that 
supports tracking across IPv4 and IPv6 networks, browsers, VPNs, and browser privacy modes. 
Our tracking technique can provide up to 128 bits of entropy for the device ID (in the Linux implementation) and requires negligible CPU and RAM resources for its operation. 
Our technique uses standard web technologies such as Javascript, WebRTC TURN (in Chrome), and XHR (in Firefox). It assumes that a browser renders a web page with an embedded tracking HTML snippet that communicates with a \nth{1}-party tracking server (i.e., there is no reliance on common infrastructure among the tracking websites). The tracking server then calculates a device ID. This ID is based on kernel data. Therefore, the same device ID is calculated by any site that runs the same logic, regardless of the network from which the tracked device arrives, or the browser used. 

The tracking technique is based on observing the TCP source port numbers generated by the device's TCP/IP stack, which is implemented in the operating system kernel. There are several popular TCP source port generation algorithms that differ in the level of security vs. functionality they deliver. Security-wise, TCP source ports should be as unpredictable as possible to off-path attackers~\cite[\S1]{rfc6056}. Functionality-wise, TCP source ports should not repeat too often (to mitigate the ``instance-id collision'' problem~\cite[\S2.3]{rfc6056}). 
 
 RFC 6056 ``Recommendations for Transport-Protocol Port Randomization'' \cite[\S 3.3]{rfc6056} lists five algorithms used by different operating systems to generate TCP source port numbers. According to RFC 6056, the ``Double-Hash Port Selection'' algorithm \cite[\S 3.3.4]{rfc6056} offers the best trade-off between the design goals of TCP source ports (see \autoref{sec:rfc6056}),
and indeed it was recently adopted with minor modifications by Linux (starting with kernel version 5.12-rc1).  Our analysis targets this port selection algorithm, which we expect to propagate into Android devices as well (as Android 13 launches with kernel version 5.15 \cite{android-kernels}). 
 
Our technique finds hash collisions in one of the algorithm's hash functions. 
These collisions depend only on a secret hashing key that the OS kernel creates on boot time and maintains until the system is shut down. Thus, the set of collisions forms a device ID that spans the lifetime of this key, surviving changes to networks, transitions into and out of sleep mode, and using containers on the same machine. It does not rely on the specific choice of the hash functions for the RFC 6056 algorithm beyond the RFC's own requirement, and as such, it presents a generic attack against the RFC algorithm. 
Since our technique relies on the client's port selection algorithm, it also has some limitations. Specifically, it is ineffective when the client uses Tor or an HTTP forward proxy since they terminate the TCP connection originating at the device and establish their own TCP connection with the tracking server. Furthermore, if a middlebox rewrites the TCP source ports or throttles the traffic, it can interfere with our technique. However, we note that this kind of interference is typically NAT-related and as such is unlikely to apply to IPv6 networks.

We implemented the device tracking technique and tested it with Linux devices across various networks (cellular, WiFi, Ethernet), VPNs, browsers (Chrome, Firefox), browser privacy modes, and Linux containers. It requires the browser to dwell on the web page for 10 seconds on average, which aligns with our theoretical analysis. The resources device tracking requires from the attacker are low, which allows calculating IDs for millions of devices per tracking server.
Since off-the-shelf Android devices have not yet deployed Linux kernels that use the new port selection algorithm, we introduced the new algorithm through a patch to a Samsung Galaxy S21 mobile phone (Android device) and tested it.
We recommended countermeasures to the Linux kernel team and worked with them on a security patch that was recently released (May 2022). We discuss these recommendations in the paper.

In sum, we make the following contributions:
{\begin{itemize}
  \setlength{\itemsep}{4pt}
  \setlength{\parskip}{0pt}
  \setlength{\parsep}{0pt}
    \item Analysis of RFC 6056's ``Double-Hash Port Selection'' algorithm, showing that a practical device tracking attack can be mounted against devices using this algorithm.
    \item Adaptation of our device tracking technique to Linux.
    \item Demonstration and measurements of our device tracking technique across the Internet, in practical settings, under various conditions (browsers, networks, devices, containers, VPNs).
    \item Full source code of our demonstration tracking server.
\end{itemize}
}

\section{Related Work}
The challenges facing device tracking nowadays revolve around the reliability and scope of the available tracking techniques. Ideally, a web-based tracking technique should work across browser privacy modes (switching between the normal browsing mode and the privacy mode such as ``incognito''), across browsers, across networks (switching between cellular networks, WiFi and Ethernet networks), and even across VPN connections. Furthermore, a tracking technique should address the ``golden image'' challenge \cite{KP18}, wherein an organization provisions a multitude of identical devices (fixed hardware and software configuration) to its employees. The tracking technique should thus tell these devices apart, even though there is no difference in their hardware and software. 

Device tracking techniques can be categorized into {\em tagging} and {\em fingerprinting} techniques~\cite{Wramner}. Tagging techniques insert an ID to the device, typically at the browser level (e.g., a resource cached by the browser or an object added to the standard browser storage such as cookies or {\tt localStorage}). Fingerprinting techniques measure a system or browser's features that can tell devices apart, such as fonts, hardware, and system language. 

Klein and Pinkas~\cite{KP18} provide an extensive review of browser-based tracking techniques, current for 2019. They evaluate the techniques' coverage of the golden image challenge 
and ability to cross the privacy mode gap. They find that typically, fingerprinting techniques fail to address the golden image challenge, while tagging techniques fail when users use the browser in privacy mode.
They found that no single existing technique was able to fulfill both requirements. The technique suggested in \cite{KP18} does not work across networks, and as such, its practicality is limited. A recent (2020) list of fingerprinting techniques (none of which overcomes the golden image challenge by default) is provided in \cite{10.1145/3386040}.

Since the analysis provided in \cite{KP18}, the browser-based tracking landscape grew with several new techniques. A tagging technique~\cite{DBLP:conf/ndss/SolomosKKP21} used the browser favicons cache to store a device ID. 
This was since fixed in Chrome 91.0.4452.0 \cite{chrome-bug-1096660,chrome-commit-c1ce99f4ed592d25d19ab37c18d6d9512655b14a}.
 A fingerprinting technique based on measuring GPU features from the WebGL 2.0 API is provided in \cite{Jiang2020FingerprintingWA}. A fingerprinting method based on TLS client features and behavior is described in \cite{TLSfp}. All the above techniques suffer from their respective category (fingerprinting/tagging) drawbacks.

A technique somewhat similar to \cite{KP18} that uses the stub resolver DNS cache and times DNS cache miss vs. DNS cache hit events is presented in \cite{8937592}, but this technique (like \cite{KP18}) does not work across networks. 

The ``Drawn Apart'' browser-based tracking technique~\cite{Drawn-Apart} is based on measuring timing deviations in GPU execution units. This technique is oblivious to the device's network configuration and the choice of browser and privacy mode and can also tell apart devices with identical hardware and software configurations. However, it does so with limited accuracy -- 36.7\%-92.7\% in lab conditions~\cite[Table 1]{Drawn-Apart}, which is insufficient for large-scale tracking. 

Klein et al.'s works \cite{KP19-usenix,flowlabel,Portland} revolved around a tracking concept based on kernel data leakage, which identifies individual devices. The leakage occurred in various network protocol headers (IPv4 ID, IPv6 flow label, UDP source port). All of them were quickly fixed by the respective operating system vendors due to their severity and impact and were no longer in effect when our research was conducted.

\section{Background}\label{sec:rfc6056}
 RFC 6056 \cite[Section 3.3]{rfc6056} analyzes five TCP source port allocation algorithms and defines several design goals~\cite[Section 3.1]{rfc6056}. The two goals relevant to this work are:
 \begin{itemize}
   \setlength{\itemsep}{4pt}
  \setlength{\parskip}{0pt}
  \setlength{\parsep}{0pt}
     \item Minimizing the predictability of the ephemeral port numbers used
      for future transport-protocol instances.
      \item Minimizing collisions of instance-ids [TCP 4-tuples].
 \end{itemize}
 The first goal aims for security against blind TCP attacks, such as blind reset or data injection attacks \cite{rfc5961}. The second goal is functionality-related and ensures that successive port assignments for the same destination (IP and port) do not re-use the same source port since this can cause a TCP failure at the remote end. That is because if the device terminates one TCP connection, the server may still keep it active (in the TCP TIME\_WAIT state) while the device attempts to establish a second connection with the same TCP 4-tuple, which will fail since the server has this 4-tuple still in use.
 
\paragraph{Double-Hash Port Selection Algorithm.}
Our focus is on RFC 6056's Algorithm 4, ``Double-Hash Port Selection Algorithm'' (DHPS), which is detailed in~\autoref{alg:alg-4}. This algorithm selects a TCP source port for a given $\mathit{IP}_\mathit{SRC},\mathit{IP}_\mathit{DST},\mathit{PORT}_\mathit{DST}$, which we term the connection's {\em 3-tuple}. Thus the algorithm completes a 3-tuple into a (TCP) 4-tuple. In this algorithm, $\mathit{table}$ is a ``perturbation table'' of $T$ integer counters (in the Linux kernel, $T=256$), $F$ is a cryptographic keyed-hash function which maps its inputs to a large range of integers, e.g., $[0,2^{32}-1]$, and $G$ is a cryptographic keyed-hash function which maps its inputs to $[0,T-1]$. The TCP source ports the algorithm produces are in the range $[\mathit{min\_ephemeral},\mathit{max\_ephemeral}]$ (in the Linux kernel, by default, $\mathit{min\_ephemeral}=32768, \mathit{max\_ephemeral}=60999$). DHPS calculates an index $i$ to a counter in $\mathit{table}$ based on a keyed-hash ($G$) of the given 3-tuple and uses the counter value offset by another keyed-hash ($F$) of the 3-tuple as a first candidate for a port number (using modular arithmetic to produce a value in $[\mathit{min\_ephemeral},\mathit{max\_ephemeral}]$). DHPS then checks whether the port number candidate is suitable (\textsc{CheckSuitablePort}). This check is intentionally under-specified in the RFC so that each implementation may run its own logic. For example, the Linux kernel checks whether there already is a 4-tuple with these parameters. If this check passes, DHPS returns the candidate port; otherwise, it increments the candidate port and runs the check again.

\begin{algorithm}[t]
\caption{DHPS Source Port Selection (RFC 6056 \S3.3.4)}
\label{alg:alg-4}
\begin{algorithmic}[1]
\Procedure{SelectEphemeralPort}{}
\State $\mathit{num\_ephemeral} \gets $
\State \hskip2.0em $\mathit{max\_ephemeral} - \mathit{min\_ephemeral} + 1$
\State $\mathit{offset} \gets F_{K_1}(\mathit{IP}_\mathit{SRC},\mathit{IP}_\mathit{DST},\mathit{PORT}_\mathit{DST})$
\State $\mathit{index} \gets G_{K_2}(\mathit{IP}_\mathit{SRC},\mathit{IP}_\mathit{DST},\mathit{PORT}_\mathit{DST})$ 
\State $\mathit{count} \gets \mathit{num\_ephemeral}$
\Repeat
\State $\mathit{port} \gets \mathit{min\_ephemeral}+$
\State \hskip2.0em $((\mathit{offset}+\mathit{table}_\mathit{index}) \mod \mathit{num\_ephemeral})$
\State $\mathit{table}_\mathit{index} \gets \mathit{table}_\mathit{index}+1$
\If{\Call{CheckSuitablePort}{$\mathit{port}$}}
\State return $\mathit{port}$
\EndIf
\State $\mathit{count} \gets \mathit{count}-1$
\Until{$\mathit{count}=0$}
\State return ERROR
\EndProcedure
\end{algorithmic}
\end{algorithm}

\paragraph{Port selection in the Linux kernel.}
Linux version 5.12-rc1 switched from RFC 6056's Algorithm 3 (``Simple Hash-Based Port Selection Algorithm'') to DHPS,  quoting security and privacy concerns as the reason for this change \cite{Linux-patch-RFC6056}. Starting from this version, the Linux kernel uses DHPS to generate TCP source ports for outbound TCP connections over IPv4 and IPv6. The Linux implementation and its few minor modifications are discussed in \autoref{sec:linux}. 
Linux kernel version 5.15 is the first long-term service (LTS) kernel version in which DHPS is used, thus LTS Linux installations using unpatched kernel 5.15 and above are vulnerable (see \autoref{sec:vendor-status}). 

\paragraph{Port selection in Android.}
 The Android operating system kernel is based on Linux and as such vulnerabilities found in Linux may also impact Android. The TCP source port selection algorithm in Android depends on the underlying Linux kernel version: devices running Linux kernels $\le$ 5.10 do not use DHPS and therefore are not vulnerable, whereas devices running a more recent kernel use DHPS and {\em may} be vulnerable if unpatched. At the time of writing, Android devices on the market use kernel version 5.10, even for Android 13, though Android 13 running kernel 5.15 is likely to be released in the near future \cite{android-kernels}. To assess the feasibility of our attack on Android, we conducted an experiment using an Android device with a modified kernel that includes the flaw (see \autoref{sec:android} for results). Since the vulnerability was patched on Linux in May 2022, and the patch was merged into Android as well, we expect future Linux and Android devices that use DHPS to be safe.

\section{Device Tracking Based on DHPS}

\paragraph{Attack model.} We assume the victim's (Linux-based) device runs a browser that renders a web page containing a ``tracking snippet'' -- a small piece of HTML and Javascript code (which runs in the browser's Javascript sandbox). The snippet implements the client-side logic of the tracking technique. It can be embedded in web pages served by, e.g., commerce websites or \nth{3}-party advertisements. When a device visits these pages, 
the tracking server logic calculates a unique ID for that device, allowing tracking it both with respect to the time dimension and the space dimension (visited websites).

\paragraph{Device ID.} When the browser renders the tracking snippet and executes the Javascript code in it, the code makes the browser engage in a series of TCP connection attempts with the attacker's tracking server, interleaved with TCP connection attempts to a localhost address. 
By observing the browser's traffic TCP source ports at the tracking server, the attacker can deduce hash collisions on $G_{K_2}$ (this is explained later). The attack concludes when the attacker collects enough pairs of hash collisions on $G_{K_2}$ where $\mathit{IP}_\mathit{SRC},\mathit{IP}_\mathit{DST}$ are fixed loopback addresses, i.e., pairs $(x,y)$ such that
\begin{align}
G_{K_2}(\mathit{IP}_\mathit{SRC}=127.0.0.1,\mathit{IP}_\mathit{DST}=127.1.2.3,\mathit{PORT_{DST}}=x)=\nonumber\\
G_{K_2}(\mathit{IP}_\mathit{SRC}=127.0.0.1,\mathit{IP}_\mathit{DST}=127.1.2.3,\mathit{PORT_{DST}}=y)\nonumber
\end{align}
These pairs depend only on $K_2$, and as such, they represent information on $K_2$ and on it alone. $K_2$ is statistically unique per device (up to the birthday paradox) and does not rely on the current browser, network, or container. Thus, the set of collisions $\{(x_i,y_i)\}$ forms a device ID that persists across browsers, browser privacy mode, network switching, some VPNs, and even across Linux containers. The ID is invalidated only when the device reboots or shuts down.

The use of loopback addresses is critical to the effectiveness of the attack. Using the Internet-facing address of the device does not yield a consistent device ID across networks. While hash collisions based on the Internet-facing address {\em can} be calculated, they are of no use when the device moves across networks because the device typically obtains a new, different Internet-facing address whenever it connects to another network. Thus, the collisions calculated for different networks will likely be completely different sets.

\paragraph{Limitations.} Our technique tracks client devices through their source port choice. A middlebox such as a NAT may modify the client's source port selection and cause the tracking service to fail to compute a consistent device ID. Furthermore, a device establishing organic TCP connections during the (short) time the attack executes may also thwart the attack; however, we integrate a mechanism for robustness against organic TCP connections. 
In~\autoref{sec:eval} we evaluate the attack under these conditions (devices connected through NATs and establishing organic connections while the attack executes).
Our technique cannot track clients that connect via forward proxies, which establish a new TCP connection to the tracking server (instead of a direct connection from the client). Particularly, it is ineffective against Tor clients.

\subsection{Attack Overview}
The attack exploits a core vulnerability in DHPS. DHPS assigns a source port to a destination (IP address and port -- $2^{48}$ combinations for IPv4) using a state maintained in one of the cells of its small perturbation table. This means that many destinations are generated using the same table cell, i.e.,  using a state that changes in a predictable way between accesses (DHPS increments the cell per each usage). The attack exploits this behavior for detecting collisions in the cell assignment hash function. Such collisions among loopback destinations are invariant to the network configuration of the device and can thus serve as a stable device ID.

To describe the attack, we first define two subsets of tuples that are of special interest for our tracking technique:
\begin{itemize}
    \item An {\em attacker 3-tuple} is a 3-tuple in which $\mathit{IP_{DST}}$ is the attacker's tracking server IP address, and $\mathit{IP_{SRC}}$ is the Internet-facing address used by the measured device.  
    \item A {\em loopback 3-tuple} is a 3-tuple in which $\mathit{IP_{DST}}$ is a fixed loopback address (e.g., 127.1.2.3), and $\mathit{IP_{SRC}}$ is a loopback-facing address used by the measured device (typically 127.0.0.1).  
\end{itemize}
The goal of the attack is to find collisions in $G_{K_2}$ for loopback 3-tuples. These collisions, described as pairs of loopback 3-tuples that hash to the same value, form the device ID. 

The attack consists of two phases. In the first phase (\autoref{alg:phase-1}), the attacker obtains $T$ attacker 3-tuples, each one corresponding to one cell of the perturbation table. The attacker does not know which 3-tuple maps to which cell, but that is immaterial to the attack. All the attacker needs is the existence of a 1-to-1 mapping between the perturbation table and the $T$ attacker 3-tuples. In the second phase, the attacker maps loopback 3-tuples into attacker 3-tuples (each loopback 3-tuple considered is mapped to the attacker 3-tuple that falls into the same perturbation table cell). This allows the attacker to detect collisions in $G_{K_2}$ for loopback 3-tuples.

\subsection{Phase 1}
\label{sec:phase1}
In this phase, the attacker obtains $T$ attacker 3-tuples so that each one corresponds to a unique cell in the perturbation table. This is done in iterations, as shown in \autoref{alg:phase-1}. Define $S'_0=\varnothing$. In iteration $i$, the attacker generates a set $S_i$ of {\em new} attacker destinations (in \extref{sec:phase1-analysis} we show that $|S_i|=T-1$ minimizes the number of phase 1 iterations).
The attacker then instructs the browser to send three bursts of TCP connection attempts (TCP SYN packets): the first burst to $S_i$, then the second burst to $S'_{i-1}$, and the third burst to $S_i$ again. 
An attacker 3-tuple in $S_i$ is determined to be {\em unique} if the difference between the two sampled source ports for that 3-tuple is 1, indicating that no other attacker 3-tuple in $S'_{i-1}$ or $S_i$ shares the same perturbation table cell. Define $V_i$ to be all such attacker 3-tuples in $S_i$, and define $S'_i=S'_{i-1} \cup V_i$. This is repeated until $|S'_i|=T$, i.e., all perturbation table cells are uniquely covered by the attacker 3-tuples in $S'_i$. 
\autoref{fig:phase1-illustration} illustrates a single iteration.

\begin{figure*}[ht!]
\centering
\includegraphics[trim=0cm 0.83cm 0cm 3.35cm, clip=true,width=\textwidth,height=\textheight,keepaspectratio]{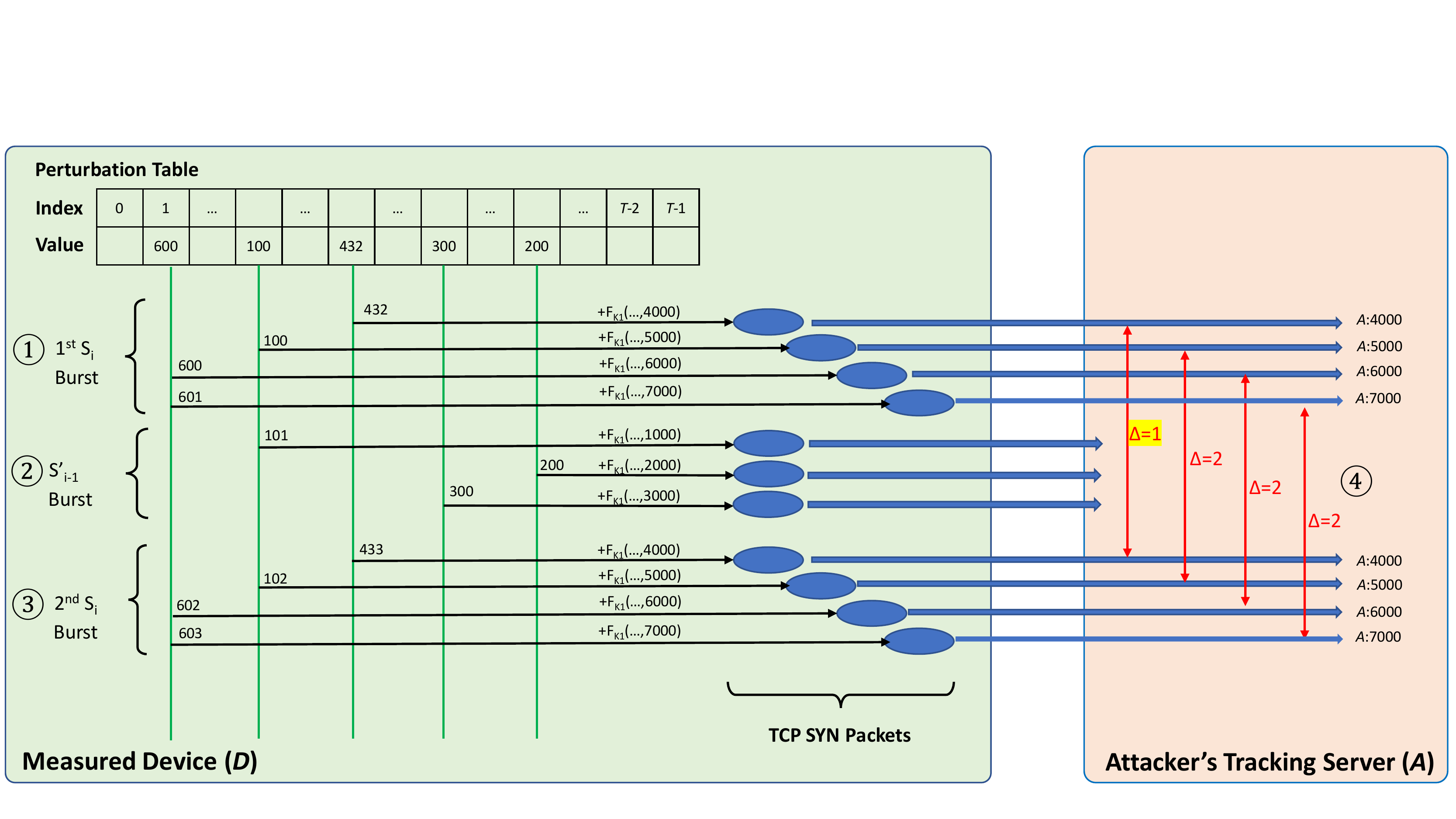}
\vspace{-5pt}
\caption{Phase 1 -- Single Iteration.
This example illustrates how the attacker adds 3-tuples which fall uniquely into cells. In Step \circled{1}, the device sends a first burst of TCP SYN packets for the new 3-tuple candidates, $S_i$ (4 in this example). In Step \circled{2}, the device sends the burst of TCP SYN packets for the set $S`_{i-1}$ of unique-cell attacker tuples (3 are shown in the illustration). In Step \circled{3}, the device sends a second burst of TCP SYN packets for the new 3-tuple candidates. In Step \circled{4}, the tracking server detects that only for the attacker 3-tuple which has destination port 4000, the source port was advanced by 1 (yellow background), which indicates that this 3-tuple has a unique cell. The attacker's 3-tuple with destination port 5000 had its source port advanced by 2 because it shares a cell with destination port 1000 in $S'_{i-1}$, and the 3-tuples with destination ports 6000 and 7000 share a cell; hence their source ports were advanced by 2.}
\label{fig:phase1-illustration}
\end{figure*}

In \autoref{sec:traffic-management}, we also show how phase 1 by itself can be used to measure the rate of outbound TCP connections. 

\begin{algorithm}[t]
\caption{Finding Attacker 3-Tuple per Cell (Phase 1)}
\label{alg:phase-1}
\begin{algorithmic}[1]
\Procedure{SendBurst}{$X$}
\ForAll{$x \in X$} 
\State $\Call{AttemptConnectTCP}{x}$
\EndFor
\EndProcedure

\Procedure{GetSourcePorts}{$U$}
\State \Call{SendBurst}{$U$}
\State $R \gets \Call{ReceiveAttackerTupleToPortMap}$ 
\State {\color{gray}\Comment $R = \{(\mathit{IP_{SRC}},\mathit{IP_{DST}},\mathit{PORT_{DST}}) \mapsto \mathit{PORT_{SRC}}\}$}
\State {\color{gray}\Comment (obtained from the tracking server)}
\State \Return $R$
\EndProcedure

\Procedure{Phase1}{}
\State $S' \gets \varnothing$
\While{$|S'|<T$}
\State $S_i \gets \Call{GetNewExternalDestinations}$ 
\State {\color{gray}\Comment $\forall_{j<i}(S_i \cap S_j = \varnothing)$}
\State $P \gets \Call{GetSourcePorts}{S_i}$ {\color{gray}\Comment \nth{1} burst}
\State $\Call{SendBurst}{S'}$ {\color{gray}\Comment \nth{2} burst}
\State $P' \gets \Call{GetSourcePorts}{S_i}$ {\color{gray}\Comment \nth{3} burst}
\State $S' \gets S' \cup \{x|P'(x)-P(x)=1\}$ 
\State {\color{gray}\Comment $V_i=\{x|P'(x)-P(x)=1\}$}
\EndWhile
\State \Return $S'$ 
\EndProcedure
\end{algorithmic}
\end{algorithm}

\subsection{Phase 2}
In the second phase, the attacker goes over a list $L$ of loopback 3-tuples. For each loopback 3-tuple, the attacker finds which attacker 3-tuple (one of the $T$ attacker 3-tuples found in phase 1) belongs to the same perturbation table cell. This mapping allows the attacker to find collisions in $G_{K_2}$ outputs among loopback 3-tuples, which (together with the number of iterations $l$) form the device ID. 

\begin{figure*}[ht!]
\centering
\includegraphics[trim=0cm 0.83cm 0cm 1.4cm, clip=true,width=\textwidth,height=\textheight,keepaspectratio]{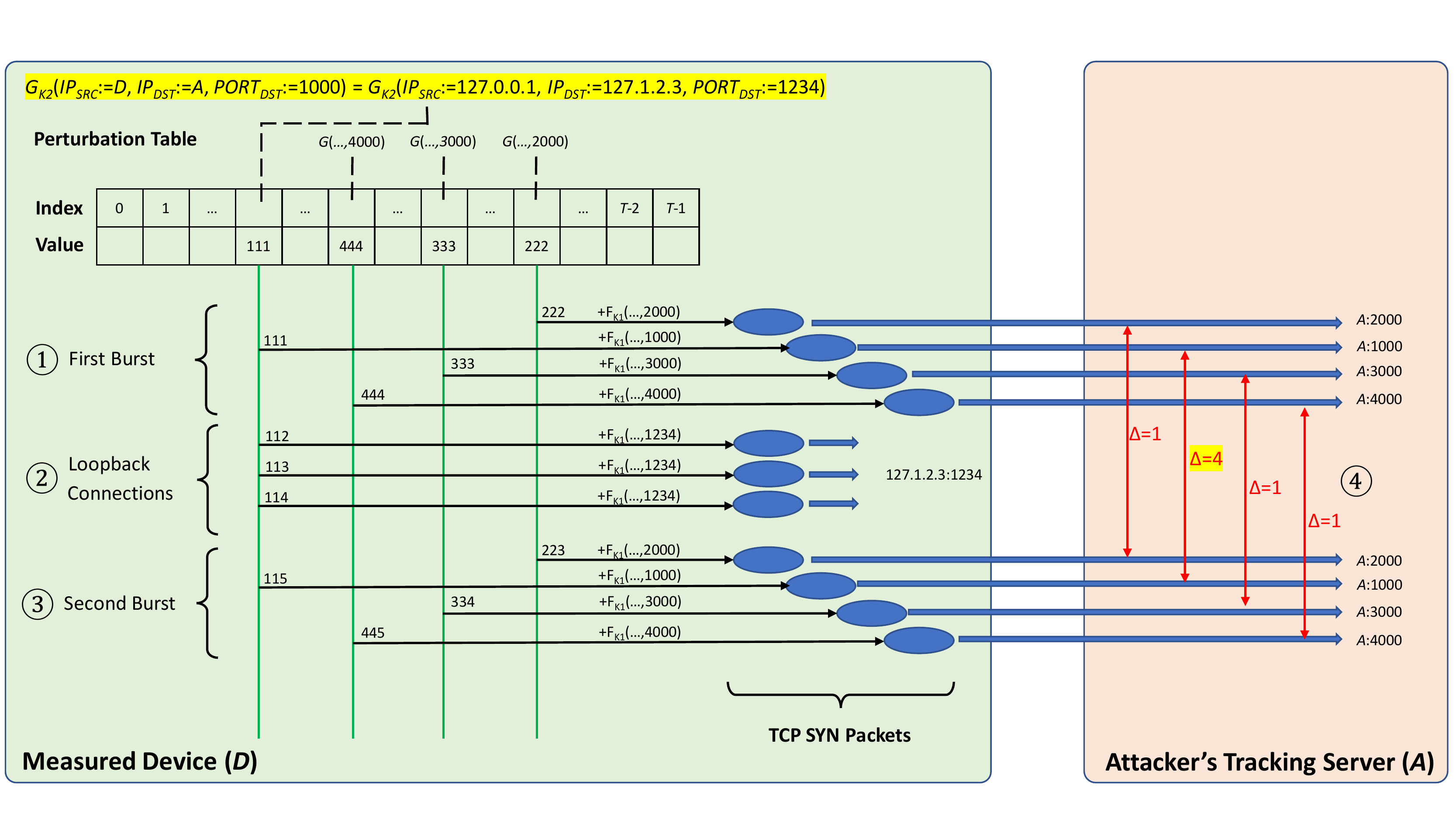}
\caption{Phase 2 -- Single Iteration.
This example illustrates how the attacker discovers that the the cell of the loopback 3-tuple $(\mathit{IP_{SRC}}=127.0.0.1, \mathit{IP_{DST}}=127.1.2.3, \mathit{PORT_{DST}}=1234)$ is identical to the cell of the attacker 3-tuple $(\mathit{IP_{SRC}}=A, \mathit{IP_{DST}}=D,\mathit{PORT_{DST}}=1000)$. In Step \circled{1}, the device sends a first burst of TCP SYN packets for all $T$ unique-cell attacker tuples (only 4 are shown in the illustration). In Step \circled{2}, the device sends several (3 in this example) TCP SYN packets to the loopback destination 127.1.2.3:1234. In Step \circled{3}, the device sends a second burst of TCP SYN packets for all $T$ unique-cell attacker 3-tuples. In Step \circled{4}, the tracking server detects that only for the attacker 3-tuple which has destination port 1000, the source port was advanced by at least 4 (right hand side, yellow background), which indicates that this 3-tuple shares the same counter (cell) with the tested loopback 3-tuple (yellow background equation at the upper left corner).}
\label{fig:phase2-illustration}
\end{figure*}

\begin{figure*}[t]
\centering
\includegraphics[trim=1cm 7.8cm 0.5cm 2.6cm, clip=true, width=\textwidth,height=\textheight,keepaspectratio]{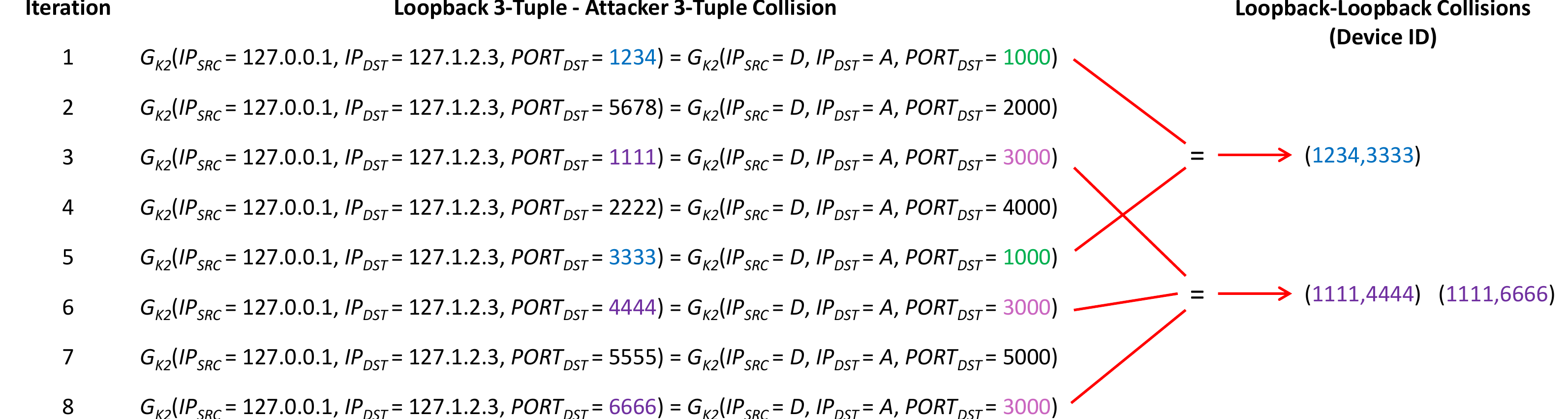}
\vspace{-8pt}
\caption{Phase 2 -- Calculating a Device ID from Multiple Iterations}
\label{fig:phase2-device-id}
\end{figure*}

In this phase (\autoref{alg:phase-2}), the attacker repeatedly runs iterations, until enough $G_{K_2}$ collisions are collected. In each iteration, the attacker maps a new loopback 3-tuple  $L_i$ to an attacker 3-tuple $w$, which hashes into the same cell of the perturbation table as the loopback 3-tuple. This is done by ``sandwiching'' a few loopback 3-tuple $L_i$ packets between bursts to all $T$ attacker 3-tuples obtained in phase 1, and observing which attacker 3-tuple $w$ has a port increment $>1$ (see \autoref{fig:phase2-illustration}). The attacker then collects new collisions with the first loopback 3-tuple in the cell ($B_w$) and counts the total number of collisions in $n$. This is illustrated in \autoref{fig:phase2-device-id}.

The attacker collects and counts ``independent'' colliding pairs. By this term we mean that if there are exactly $k$ loopback 3-tuples $x_1,\ldots,x_k$ that fall into the same cell, the attacker only uses $k-1$ pairs e.g. $(x_1,x_2),\ldots,(x_1,x_k)$, out of the possible $\binom{k}{2}$ pairs. 

\begin{algorithm}[t]
\caption{Finding a Device ID (Phase 2)}
\label{alg:phase-2}
\begin{algorithmic}[1]
\Procedure{Phase2}{}
\State $C \gets \varnothing$ \: ; \: $n \gets 0$ \: ; \: $i \gets 0$
\Repeat
\State $i \gets i+1$
\State $P \gets \Call{GetSourcePorts}{S'}$ {\color{gray}\Comment \nth{1} burst}
\State $\Call{AttemptConnectTCP}{\{L_i\}}$ 
\State $P' \gets \Call{GetSourcePorts}{S'}$ {\color{gray}\Comment \nth{2} burst}
\State $w \gets \:$\rotatebox[origin=c]{180}{$\iota$}$ x(P'_x-P_x>1)$ {\color{gray}\Comment $|\{x|P'_x-P_x>1\}|=1$}
\If{$\Call{Defined}{B_w}$} {\color{gray}\Comment A collision was found}
\State $C \gets C \cup \{(L_i,B_w)\}$ 
\State {\color{gray}\Comment Add the (single) independent pair to $C$}
\State $n \gets n+1$
\Else
\State $B_w \gets L_i$
\EndIf
\Until{$n \geq n^*_i$} 
\State {\color{gray}\Comment This is equiv. to $P^i_D(n) \leq p^*$ (\extref{sec:phase2-analysis})}
\State $l \gets i$
\State \Return $(C,l)$ 
\EndProcedure
\end{algorithmic}
\end{algorithm}

 Note that our attack makes no assumptions on the choice of \autoref{alg:alg-4}'s parameters (the hash functions $F$, $G$), beyond assuming that $G$ is reasonably uniform, which is guaranteed since RFC 6056 \cite[Section 3.3.4]{rfc6056} mandates that ``G() should be a cryptographic hash function''.
 
\subsubsection{Terminating with an Accurate ID}
\label{subsec:terminating}
We want phase 2 to terminate as soon as ``enough'' collisions are observed. By this, we mean that the probability of another device (i.e., a device with a random $K_2$) to produce the same set of collisions  is below a given threshold. Thus, we define our target function $P^l_D(n)$ to be the probability of a random device getting the same ID as the device $D$ at hand, which is assumed to terminate after $l$ iterations with exactly $n$ independent pairs. (We show below that $P^l_D(n)$ does not depend on the ``structure'' of the independent collisions, only on their number, i.e., it is well defined.) Note that $P^l_D(n)$ is defined for $l \geq 1$ and $\max(0,l-T) \leq n \leq l-1$. We can define $P_D=0$ elsewhere. 
We will then require $P^l_D(n) \leq p^*$, where $p^*$ is a threshold acceptance probability. As explained in \extref{sec:phase2-analysis}, the choice of $p^*$ depends on the expected device population size $N$, e.g. $p^*=\nicefrac{1}{\binom{N}{2}}$ guarantees there will be up to one ID collision on average in the entire population. 
We provide $p^*$ values for example population sizes in \extref{sec:phase2-analysis}.

To calculate $P^l_D(n)$, we begin by calculating the probability for a random device to have exactly the same set of collisions as device $D$, after $l$ iterations. Lemma \ref{lem:pldn_correctness} shows that this probability is in fact $P^l_D(n)$. Analyzing this probability is a ``balls in buckets'' problem. The buckets are the $T$ perturbation table cells (that map 1-to-1 to the $T$ attacker 3-tuples from phase 1), and the balls (in iteration $l$ of phase 2) are the loopback 3-tuples that are mapped to the perturbation table cells. However, when looking at two devices, $D$ and $D'$, the attacker has no way to map the buckets between the devices. The only information the attacker can consider is collisions among the balls, i.e., which balls (loopback 3-tuples) fall into the same bucket in both devices. We call this the ``structure'' of collisions. 
More formally, we define the structure of collisions in $D$ (after $l$ iterations) as follows: in a device $D$, we have $l-n$ occupied (non-empty) buckets $B_i$, and we have a set $C_i$ of loopback 3-tuples in $B_i$ such that $|C_i|=m_i$, and of course $\sum_i m_i=l$. The collision structure in $D$ then is the set of sets $\{C_1,\ldots,C_{l-n}\}$. 

For two devices $D'$ and $D$ to have the same structure, we look at the {\em first} loopback 3-tuple in each bucket -- denote $f_i$ as the first loopback 3-tuple in bucket $B_i$. When building the bucket for $D'$, we start with all-empty buckets. The first loopback 3-tuple $f'_1$ in $B'_1$ can pick any bucket, thus the probability to succeed in matching the structure of $D$ is $\frac{T}{T}=1$. The first loopback 3-tuple $f'_2$ in $B'_2$ has probability $\frac{T-1}{T}$ to match $D$'s structure, since it must not hit the first bucket $B'_1$, and so forth. So the combined probability of all first loopback 3-tuples in their buckets to match $D$'s structure is $\prod_{i=0}^{l-n-1} (1-\frac{i}{T})$. The remaining loopback 3-tuples must each go to its bucket in order to match, so each one has probability $\frac{1}{T}$. Therefore, 
$$P^l_D(n)=\frac{\prod_{i=0}^{l-n-1} (1-\frac{i}{T})}{T^n}$$
Interestingly, this probability does not depend on $(m_1,m_2,\ldots)$ of the structure -- it only depends on the total number of independent collisions, $n$. This means that $P^l_D(n)$ is well defined. 

We now show that $P^l_D(n)$ that we just calculated describes the probability for a random device to have the same device ID as D. This is proved by this lemma:
\begin{lemma}\label{lem:pldn_correctness}
If device D' has the same collision structure as in the device ID of device D after $l$ iterations, where $l$ is taken from the signature of $D$, then D' and D must have an identical device ID.
\end{lemma}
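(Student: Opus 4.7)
The plan is to argue that the device ID $(C, l)$ produced by Phase 2 is a deterministic function of $l$ together with the collision structure $\{C_1, \ldots, C_{l-n}\}$, viewed as an unordered partition of the prefix $\{L_1, \ldots, L_l\}$ of the fixed loopback query list. Since the list $L$ of loopback 3-tuples is chosen by the attacker and is identical for every device under test, two devices that realize the same partition after the same number of iterations will drive \autoref{alg:phase-2} through the same sequence of ``collision / no-collision'' verdicts and produce the same set $C$ of independent pairs.

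First I would make explicit what the algorithm records at each step: at iteration $i$, the attacker maps $L_i$ to whichever bucket $w$ it hashes into; if that bucket is still empty (i.e., $B_w$ is undefined) the algorithm sets $B_w \gets L_i$, and otherwise it emits the single independent pair $(L_i, B_w)$ into $C$. Hence, for a bucket that eventually contains the loopback 3-tuples $L_{i_1}, \ldots, L_{i_m}$ with $i_1 < i_2 < \cdots < i_m$, the contribution to $C$ is exactly $\{(L_{i_2}, L_{i_1}), \ldots, (L_{i_m}, L_{i_1})\}$; this depends only on the set of indices in the bucket together with the shared global ordering of $L$.

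Next I would observe that the collision structure is precisely the unordered collection of these index sets: it records which subsets of $\{L_1, \ldots, L_l\}$ co-inhabit a bucket, but not which bucket. Applying the previous observation block by block, for each block $C_j$ of the partition let $f_j$ be its minimum-index element; then $C = \bigcup_j \{(L_i, f_j) : L_i \in C_j,\ L_i \neq f_j\}$, an expression that is manifestly invariant under any relabeling of bucket indices. Combined with the value of $l$ supplied by the hypothesis, this recovers the full device ID $(C, l)$ of $D$ from the corresponding data of $D'$, completing the proof.

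The main, and essentially only, subtle point is justifying that the ``first'' element $f_j$ of each block is the same for $D$ and $D'$. This holds because the attacker issues the $L_i$ in a fixed, deterministic order that depends on neither $K_2$ nor any device-specific state, so the minimum-index element of each block in $D'$ coincides with its counterpart in $D$. Beyond this, the argument is a direct unrolling of \autoref{alg:phase-2}, and I would expect the write-up to be quite short.
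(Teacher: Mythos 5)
Your proof is correct, and it rests on the same two facts the paper uses --- the loopback 3-tuples are issued in a fixed, device-independent order, and everything \autoref{alg:phase-2} does at iteration $i$ is a function of the collision structure restricted to $\{L_1,\ldots,L_i\}$ --- but the emphasis is inverted relative to the paper's proof. The paper spends essentially its entire (three-sentence) proof on the point you compress into ``combined with the value of $l$ supplied by the hypothesis'': it argues that $l'=l$, i.e.\ that $D'$'s \emph{own} run of Phase~2 terminates at exactly iteration $l$, because the stopping rule $P^i_D(n)\leq p^*$ (equivalently $n\geq n^*_i$) depends only on $i$ and the running collision count $n_i$, and $n_i$ is determined by the shared structure for every $i\leq l$. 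It then treats the recovery of $C$ from the structure as immediate. You do the opposite: you carefully unroll how each bucket $\{L_{i_1},\ldots,L_{i_m}\}$ contributes the pairs $(L_{i_2},L_{i_1}),\ldots,(L_{i_m},L_{i_1})$ and why the block-minimum $f_j$ is device-independent, which is a genuinely useful elaboration the paper omits (it is exactly the ``canonical selection of independent collisions'' the appendix alludes to). The one thing I would tighten is the termination step: $l'$ is an output of $D'$'s run, not a given, so you should state explicitly that your ``same sequence of collision / no-collision verdicts'' observation forces $n_i$ to agree at every $i\leq l$, hence $D'$ neither stops before iteration $l$ nor continues past it. With that sentence added, your write-up subsumes the paper's proof and is more complete than it.
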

\begin{proof}
For device $D'$ to have the same device ID as device $D$, we also need the number of iterations to match -- i.e., we need to show that $l'=l$.  Recall that the phase 2 algorithm stops as soon as it reaches an iteration $l$ that fulfills the condition $P^l_D(n) \leq p^*$. Also, the order of loopback 3-tuples the algorithm tests is deterministic. Therefore, the algorithm for $D'$ will stop at exactly the same number of iterations as $D$.
\end{proof}

\subsection{Further Improvements}\label{subsec:device-tracking-further-improvements}

\paragraph{Phase 1 and 2 dwell time optimization.} 
The phase 1 algorithm, as depicted in \autoref{alg:phase-1}, is carried out entirely on the client-side. However, the client cannot obtain the TCP source ports of the TCP connections it attempts to establish with the tracking server (JavaScript code running in the browser has no way of accessing TCP connection information). This information is only available at the tracking server. Thus, the client and the server need to engage in a ``ping-pong'' of information exchange. This exchange is depicted in procedure $\textsc{GetSourcePorts}$ of \autoref{alg:phase-1}, where the client attempts to establish multiple TCP connections with the server, and the server returns a map from destinations to their TCP source ports. This ping-pong does not need to follow the exact form of \autoref{alg:phase-1}. In particular, there is no need to collect the TCP source ports of the first burst before sending the second burst. The two bursts can be sent one after another without pausing, as long as the client-side (OS) sends the bursts in the order the client prescribes. Similarly, the third burst can be sent by the client immediately after the second burst, since there is no need to synchronize with the server after the second burst is emitted.
Note that in \autoref{alg:phase-1}, proceeding to the next iteration requires the client to know which destination addresses to add to $S'$ (the set of unique attacker 3-tuples). 
Thus, moving {\em across iterations} requires the client and server to synchronize.

The same argument can be applied to phase 2 (\autoref{alg:phase-2}). In essence, this algorithm can run almost asynchronously between the client and the server. The client only ensures that the bursts are distinguishable at the server-side (see below).
We further optimize by moving the termination logic to the server and having the client run through the iterations over $L_i$ and only carry out TCP connection attempts. Thus, for each new iteration, the server collects the source ports and updates the attacker 3-tuple cell ($B_w$), the device ID ($C$), and the number of independent pairs ($n$). It also checks the termination condition and signals the client to stop when it is met.  
In this scheme, the client does not need to wait for the server's response after each burst. Hence, the client is free to make a significant optimization, entirely getting rid of the first burst in each iteration, since the server can use the previous ``second'' burst just as accurately. The phase 2 algorithm then becomes \autoref{alg:phase-2-optimized}.

\begin{algorithm}[t]
\caption{Finding a Device ID (Phase 2, Optimized)}
\label{alg:phase-2-optimized}
\begin{algorithmic}[1]
\Procedure{Phase2-Server}{}
\State $C \gets \varnothing$ \:; \: $n \gets 0$ \:; \: $i \gets 0$
\State $P\gets \Call{CollectSourcePorts}{S'}$
\Repeat
\State $i \gets i+1$
\State $P' \gets \Call{CollectSourcePorts}{S'}$ 
\State $w \gets \:$\rotatebox[origin=c]{180}{$\iota$}$ x(P'_x-P_x>1)$ {\color{gray}\Comment $|\{x|P'_x-P_x>1\}|=1$}
\If{$\Call{Defined}{B_w}$} {\color{gray}\Comment A collision was found}
\State {\color{gray}\Comment Add the (single) independent pair to $C$}
\State $C \gets C \cup \{(L_i,B_w)\}$ 
\State $n \gets n+1$
\Else
\State $B_w \gets L_i$
\EndIf
\State $P \gets P'$
\Until{$n \geq n^*_i$} 
\State {\color{gray}\Comment This is equiv. to $P^i_D(n) \leq p^*$ (\extref{sec:phase2-analysis})}
\State $\Call{SignalClientStop}$
\State $l \gets i$
\State \Return $(C,l)$ 
\EndProcedure

\Procedure{Phase2-Client}{}
\State $\Call{SendBurst}{S'}$
\For{$i=1$ to $l_{max}$}
\State $\Call{AttemptConnectTCP}{L_i}$
\State $\Call{SendBurst}{S'}$
\EndFor
\EndProcedure
\end{algorithmic}
\end{algorithm}

\paragraph{Burst separation.}
The optimization just described mandates the server to separate the traffic back into bursts since packets might be re-ordered by the network. Now that the client and server are not synchronized, this task is a bit more challenging. 
The client controls the timing on its end, i.e., a burst is followed by loopback traffic, followed by another burst, etc., precisely in this order. On the server-side, we can separate bursts using the source port itself. When we look at all the traffic to a certain destination (port), we can deduce the order of sending by sorting the packets by their TCP source port, which is monotonously increasing with the sending time (by the properties of the DHPS algorithm). So the packet with the lowest TCP source port (for a specific server port destination) belongs to the first burst, the second packet -- to the second burst, etc. 
That being said, source ports may wraparound if $\mathit{offset+{table}_{index}}$ exceeds $\mathit{num\_ephemeral}$. To account for that, we begin by normal sorting and then find the wraparound point by looking for a large enough difference that has been created as a result of the wraparound.

\paragraph{Minimizing number of bursts.}
We modify the phase 2 algorithm to group several loopback 3-tuples together, sandwiched between attacker-directed bursts. Let $\alpha$ be the number of loopback 3-tuples in a group. (In our implementation, $\alpha=4$.) Our modified algorithm reduces the number of phase 2 iterations by a factor of $\alpha$. Let $L_0,L_1,\dots,L_{\alpha-1}$ be the loopback 3-tuples in a single group. The idea is to vary the number of connections made for each loopback 3-tuple in a group, and have the tracking server differentiate between the loopback 3-tuples in the group by the magnitude of the difference between its two sampled measurements. Our algorithm makes $\beta \times 2^{i}$ connections to $L_i$ ($\beta=50$ in our implementation).

Denote by $\Delta_w$ the difference between the two source port measurements collected by the server for attacker 3-tuple $w$. With $\alpha=1$, it is the same as described until now: $L_0$ shares the same table cell as an attacker 3-tuple $w$ for which $\Delta_w=\beta+1$. With $\alpha=2$, there are two possibilities: either $L_0$ and $L_1$ each map to different attacker 3-tuples $w_0$ and $w_1$, or both collide with the same attacker 3-tuple $w$ (i.e. $w=w_0=w_1$). In the former case, the server will detect that $\Delta_{w_0}=\beta+1$ and $\Delta_{w_1}=2\beta+1$. The difference for $w_1$ is larger since the number of connections attempted for $L_1$ is twice as much as $L_0$. In the latter case, where $w_0=w_1=w$, the server will detect that for a single $w$ it has  $\Delta_w=1\beta+2\beta+1=3\beta+1$. Thus, it concludes that the two loopback 3-tuples in the group must share the same table cell. This argument can be extended to higher values of $\alpha$, see below.

\paragraph{Robustness against organic TCP connections.}
Organic TCP connections from the device spread uniformly across the $T$ cells of the perturbation table and are thus unlikely to significantly affect a single cell in the short time our attack runs. Yet, such connections noise the server's measurements
and we robustify our technique against such noise.

The phase 1 algorithm already handles noise: say $w\in S_i$ is a unique attacker 3-tuple and that some organic TCP connections were intertwined between the two source port measurements for $w$. If those TCP connections share the same table cell as $w$, then $\Delta_w>1$. In this case, the algorithm determines that $w$ is not unique. It does not compromise correctness: the algorithm continues iterating until covering all table cells. 

In phase 2, we cannot rely on an algorithm that searches for precise differences so we add safety margins. Specifically, we segment the \emph{differences space} to $2^\alpha$ disjoint segments $I_0,I_1,\dots,I_{2^\alpha-1}$, where $I_k=[k\beta+1,(k+1)\beta]$. Our algorithm then maps each attacker 3-tuple to a segment. The idea is that since the noise is typically small enough, it will cause the difference to be slightly above the ``noiseless'' expected value $k\beta+1$, but still below $(k\beta+1)+\beta$, i.e., $<(k+1)\beta+1$. In other words, the difference will still belong to the segment $[k\beta+1,(k+1)\beta]$. Thus, given a difference in segment $[k\beta+1,(k+1)\beta]$, we determine that it is a result of $k\beta$ loopback connections, and from the binary representation of $k$ we can reconstruct which $L_i$'s were mapped to this $w$. These are precisely the $L_i$'s in which $2^i$ appears as an addendum in the deconstruction of $k$ into a sum of powers of two.
To summarize, our phase 2 algorithm maps a loopback 3-tuple $L_i$ to attacker 3-tuple $w$ if and only if the $i$-th bit in the binary expansion of $k_w$ (the segment number of $w$) is one.

\subsection{Performance Analysis}
\subsubsection{Number of Iterations}\label{subsec:performance-analysis-iterations}
We analyze the run-time (in terms of iterations) of phase 1 and of phase 2 in \extref{app:analysis}.
For example, for $T=256$ (the Linux case), phase 1 (\autoref{alg:phase-1}) needs 13.8 iterations on average to conclude. For $T=256$ and a population of $N=10^6$ devices, in order for the average ID collision count to be lower than one for the entire population, phase 2 (\autoref{alg:phase-2} or \autoref{alg:phase-2-optimized}) needs 49.5 iterations on average to conclude. For a population of $N=10^9$, phase 2 takes on average 60.4 iterations to conclude, and for $N=10^{12}$, 69.9 iterations on average.

\subsubsection{Dwell Time}
In terms of dwell time (how long the browser needs to remain on the page for the process to complete), in phase 1 (\autoref{alg:phase-1}), each burst in a single iteration can be sent without waiting to the server's response. However, at the end of each iteration, the browser needs to wait for the server's response (the server updates the client on which ports should be used in the next iteration). Therefore, for phase 1, the required dwell time is the time needed for the browser to emit the 3 bursts, plus RTT, times the number of iterations. For phase 2 (\autoref{alg:phase-2-optimized}), the client and the server are not synchronized per iteration. Therefore, the dwell time is the time to emit the 2 bursts, times the number of iterations.

\subsubsection{Device Bandwidth Use}
In terms of packet count (and network byte count), iteration $i$ of phase 1 emits $|S_i|+|S'_{i-1}|+|S_i|$ packets. Since $|S_i|=T-1$ and $0 \leq S'_{i-1} \leq T-1$, we have the packet count in each iteration between $2(T-1)$ and $3(T-1)$.  Each iteration of phase 2 (\autoref{alg:phase-2-optimized}) consists of $T$ packets (we do not count the loopback packets as they do not consume physical network resources). Each packet in a burst is a TCP SYN which has a minimal size (Linux TCP SYN is 60 bytes over IPv4 and 80 bytes over IPv6).
 
\section{Implementation}
Recent Linux kernels (versions 5.12-rc1 and above) use DHPS from RFC 6056  to generate TCP source ports. The implementation includes a few modifications to the RFC algorithm and parameter choice (\S\ref{sec:linux}) that require some adaptations of our  technique (\S\ref{sec:impl:adapt}). We conclude this section by describing our implementation (\S\ref{sec:implementation-details}).

\subsection{Linux's DHPS Variant}
\label{sec:linux}

\paragraph{Perturbation table.}
The Linux implementation has $T$={\tt TABLE\_LENGTH}=256.
The values of the perturbation table are incremented by 2 (instead of 1), but this is immaterial to the attack. To simplify presentation, we ignore this detail for the rest of the discussion. Linux's implementation of \textsc{CheckSuitablePort} verifies that the $port$ is not locally reserved, and that the 4-tuple it forms, ($\mathit{IP}_\mathit{SRC},port,\mathit{IP}_\mathit{DST},\mathit{PORT}_\mathit{DST}$) is not already in use in the current network namespace (container). In Linux, $|K_2|=128$.

\paragraph{Noise injection.}
A significant modification in the Linux implementation is that it randomly injects noise to the perturbation table when {\tt \_\_inet\_hash\_connect()} finds a suitable candidate in the first iteration of \autoref{alg:alg-4}. The relevant table cell is then incremented twice (instead of once) with probability~$\frac{1}{16}$.

\subsection{Adapted Phase 2 Algorithm}\label{sec:impl:adapt}
The phase 1 algorithm already handles noise caused by organic TCP connections, as described in \autoref{subsec:device-tracking-further-improvements}, so noise injected by Linux is handled identically.
The adapted phase 2 algorithm should map the correct attacker 3-tuple for any given loopback 3-tuple, despite any noise injected by the kernel.
We already describe in \autoref{subsec:device-tracking-further-improvements} how the phase 2 algorithm can handle some noise (up to $\beta$ additional table cell increments on top of the expected ones, per a table cell associated with an attacker 3-tuple $w$). Recall, we strive to use as high as possible an $\alpha$ value since it reduces the run-time of phase~2. 
Thus, we seek to make $\alpha$ as large as possible while keeping $\beta$ higher than the noise induced by the device's Linux kernel and sporadic TCP connections.

The noise is maximal when $w$ corresponds to a table cell which has a total of $(2^\alpha-2)\beta$ connections. (The hardest task is to distinguish segment $[(2^\alpha-2)\beta+1,(2^\alpha-1)\beta]$ from $[(2^\alpha-1)\beta+1,2^\alpha\beta]$.) The amount of noise for $(2^\alpha-2)\beta$ has a binomial distribution $\mathrm{Bin}((2^\alpha-2)\beta,\frac{1}{16})$. To support a population of $10^6$ devices, we want an error probability of $10^{-6}$ to correctly find all the $L_i$ values in a given device, which requires all $\frac{64}{\alpha}$ iterations of phase 2 to succeed. (Our implementation tests 64 loopbacks in phase 2, see \autoref{sec:dwell-time}.) Therefore, we require $Prob(\mathrm{Bin}((2^\alpha-2)\beta,\frac{1}{16})\geq \beta) \leq \frac{\alpha}{64}10^{-6}$. For $\alpha=4$, this yields $\beta\geq 1244$ which is impractical since this implies $1244\times(2^4-1)=18\,660$ connections to be made for each group of loopback 3-tuples. 

We now show how we can tweak the algorithm to support $\alpha=4$ with a low $\beta$ value. For this, we observe that there is exactly one $w$ where $\Delta_w \geq 2^{\alpha-1}\beta$; it is exactly the $w$ that shares the counter with $L_{\alpha-1}$. All other $L_i$'s together cannot contribute more than $(2^{\alpha-1}-1)\beta$ to any counter. Thus, if we put the $w$ whose $\Delta_w \geq 2^{\alpha-1}\beta$ aside, we are left with differences which are $\leq (2^{\alpha-1}-1)\beta$. This upper-bound forms the worst case, with the noise distribution $\mathrm{Bin}((2^{\alpha-1}-1)\beta,\frac{1}{16})$. Again we require $Prob(\mathrm{Bin}((2^{\alpha-1}-1)\beta,\frac{1}{16}) \geq \beta) \leq \frac{\alpha}{64}10^{-6}$. This time, for $\alpha=4$, we get $\beta\geq 50$. Using $\beta=50$ results in a manageable number of connections. 

To summarize, we put aside the $w$ whose $\Delta_w \geq 2^{\alpha-1}\beta$, we find $L_i$'s that belong to all other $w$'s and associate all the remaining $L_i$'s (which were not associated to any other $w$) to the $w$ whose $\Delta_w \geq 2^{\alpha-1}\beta$. This technique allows using $\alpha=4$ with $\beta=50$ to support 1 million devices. The value $\beta$ grows slowly with the number of supported devices, e.g., supporting 1B devices requires $\beta = 73$.

\subsection{Device Tracking Technique Prototype}\label{sec:implementation-details}
We implemented a proof-of-concept of our device tracking technique. 
The client (snippet) is designed to run on both Chrome and Firefox and is implemented in approx. 300 lines of JavaScript code. 
We also implemented a client in Python that helped us during development and testing. The server is implemented in approx. 1000 lines of Go code. We describe key aspects of our implementation below and provide additional, more minor, implementation details in \autoref{sec:minor-implementation-details}.

The client and server exchange information via HTTP. In our implementation, the server acts as a ``command-and-control'' host: the client runs in a loop, requests the next command from the server, and executes it. This shifts much of the complexity from the client to the server and makes it easier to update the implementation without changing a complicated JavaScript implementation for multiple browsers.

\paragraph{Chrome and Firefox clients.}
Our client implementation for Google Chrome emits bursts of TCP connections by utilizing WebRTC. The client passes the list of destinations as a configuration for the {\tt RTCPeerConnection} object and triggers the burst with the {\tt setLocalDescription()} API. The advantage of WebRTC, compared to standard {\tt XMLHttpRequest} or {\tt fetch()} APIs, is that it allows to create connections at a rapid pace, which helps decreasing the overall dwell time. For Firefox, WebRTC is not applicable, since its implementation invokes {\tt bind()} on the sockets before it calls {\tt connect()} (see \autoref{sec:firefox}). Instead, our Firefox implementation uses {\tt XMLHttpRequest}.

\paragraph{Tracking server.}
The server uses {\tt libpcap} to capture TCP SYN packets. It associates the TCP SYN packet to an active tracking client based on the source IP address and destination port. The server also stores the source port for later processing. 
For any attack-related incoming TCP SYN, our server replies with TCP RST (+ACK), except for the HTTP/HTTPS port on which the attacker's web server listens, of course. This way, upon receiving the RST, the client immediately discards the socket and does not send any further packets on it. This also has the advantage of keeping Linux's kernel connection table relatively vacant -- otherwise, we risk hitting the process file descriptor limit (for Chrome, the limit is 8192).

\paragraph{Handling retransmissions.}
When a TCP SYN is left unanswered, Linux retransmits the SYN packet. This has the potential to confuse our server by having more source ports measurements than expected. The Linux retransmission timeout is 1 second ({\tt TCP\_TIMEOUT\_INIT} in {\tt include/net/tcp.h}). Therefore, we might encounter this situation depending on the client's network and the RTT to the tracking server. To cope with retransmissions, the server deduplicates the TCP SYN packets it receives based on the combination of source IP, source port, and destination port fields.

\subsection{Minor Implementation Details}\label{sec:minor-implementation-details}

\subsubsection{Supporting Firefox}
\label{sec:firefox}
\paragraph{WebRTC vs. \texttt{XmlHttpRequest}}
Linux uses the revised DHPS for assigning TCP source ports when {\tt connect()} is invoked with an unbound socket. This is the standard practice for establishing TCP connections from a client. However, it is also technically possible to establish a TCP connection from a client by first invoking {\tt bind()} on a socket with a zero local port (instructing the kernel to pick a source port for the socket), and then applying {\tt connect()} to it. When {\tt bind()} assigns a source port number to the socket, the kernel has no information regarding the destination, and therefore it cannot use DHPS. Instead, Linux uses Algorithm 1 of RFC 6056 in this case. 
While not intended to be used by TCP clients, Firefox does in fact use {\tt bind()} for its WebRTC connections. Therefore, in Firefox we resort to using \texttt{XmlHttpRequest}, which emits HTTP/HTTPS requests.

\paragraph{Re-connect attempt at the HTTP level}
When Firefox's HTTP connection attempt fails or times out, Firefox retries the connection at the HTTP level, i.e. it tries to establish a new TCP connection (as opposed to a kernel retry which uses the original connection parameters). These additional TCP connection attempts with their newly-generated TCP source ports cannot be easily distinguished from the expected TCP connection attempts.
The phase 1 algorithm might be affected from re-connections when an attacker 3-tuple receives $>2$ source port measurements. In such a case, the algorithm cannot analyze the attacker 3-tuple, and should discard it (in the present round), so that it is would not be flagged as ``unique''.
The phase 2 algorithm was modified too: the client contacts the server after \emph{each} group of loopback 3-tuples are tested (instead of after all groups). Further, for each burst of TCP connections the client makes, it also \emph{waits} for each connection to complete (in contrast to the Chrome implementation). This prevents concurrent loopback connections and re-connections, which may corrupt source port measurements.  Computing $\Delta_w$ is a bit more complicated, since as a result of re-connections, we might get $>2$ source port measurements (more than expected). Consequently, we compute the difference between consecutive source ports measured, and set $\Delta_w$ to the maximum difference. 
We expect all consecutive differences to be 1 (up to some noise), except a single consecutive difference, which will correspond to the loopback increment we seek. This is because we do not allow re-connection attempts to overlap with the loopback connection attempts.
For example, if we get 4 source port measurements $s_0,s_1,s_2,s_3$ then we set $\Delta_w=\mathsf{max}(s_1-s_0,s_2-s_1,s_3-s_2)$. 
The algorithm continues normally from this point onward.

\subsubsection{Private Network Access}
The Private Network Access draft standard \cite{private-network-access}, which is implemented in Chrome, does not block our attack. If the snippet page is served over a secure context (HTTPS), then for XHR requests to loopback addresses, the browser will first attempt to send pre-flight HTTP OPTIONS requests to these destinations. These TCP connection attempts (SYN packets) to the loopback addresses suffice for our attack. Moreover, the Private network Access standard only applies to HTTP protocol traffic, and therefore WebRTC traffic to loopback addresses is not covered by it.

\subsubsection{Scalability}
In principle, the technique can scale very well. A technical limitation we need to consider is that we can only associate up to 65455 ports per $(\mathit{IP_{SRC}},\mathit{IP_{DST}})$ address pair (the port range is 1-65535, with some 80 ports blocked by Chrome). This is not an issue for clients coming from different IP addresses. However, it may be a limitation when multiple clients behind one IP (NAT). We can address it by using more server IP addresses and load balance clients. 

To maximize the gain from each server IP address, it should carefully manage the assignment and release of ports from a per-client-IP pool.
In phase 1, each client consumes at most $2(T-1)$ destination ports during each iteration. In each iteration of phase 1, a batch of $T-1$ new destinations (ports) can be obtained from the server in real time (assigned from the pool). When the iteration is complete, the server determines which destinations are added to the client's list, and which are released back to the pool. Phase 1's worst case for ongoing consumption is $T-1$ ports (the client's expanding list).
In phase 2, each client consumes exactly $T$ destinations (the full client list, from phase 1). This list of destinations is allocated to the client for several seconds (the duration of phase 2).
Thus, the peak consumption for each client is $2(T-1)$, so the algorithm can sustain $\frac{65455}{2(T-1)}$ simultaneous clients (with the same IP address) per server IP address. For $T=256$, the server can sustain 128 clients (behind NAT) per server IP. 

\subsubsection{Handling packet drops}
Packets may get dropped in a network due to congestion or routing changes. We adapt the tracking server to withstand moderate packet loss. First, we detect that there are too few packets for a specific attacker 3-tuple. Then, we find the burst for which the packet is missing by timing analysis: the largest time difference between captured packets is probably where a packet is missing. For the burst where the packet is missing, we may still find the attacker 3-tuple for loopback address $L_i$ by examining if there is a valid gap in TCP source ports of another attacker 3-tuple. If not, we rerun the test for this loopback address.

\section{Evaluation}\label{sec:eval}

We use our proof-of-concept implementation to evaluate the device tracking technique against Linux-based devices. Our experiments answer the following questions:
{
\begin{enumerate}
  \setlength{\itemsep}{4pt}
  \setlength{\parskip}{0pt}
  \setlength{\parsep}{0pt}
    \item Do we get a consistent device ID across browsers, tabs, and browser privacy modes?
    \item Do we get a consistent device ID across networks, in both IPv4 and IPv6, across containers, and across VPNs?
    \item Do we get a consistent device ID when the user browses other sites during the attack?
    \item What is the dwell time required by our attack?
    \item Is the attack applicable to Android devices?
    \item How does the attack scale in terms of CPU and RAM? Is it suitable for large-scale tracking?
\end{enumerate}
}

\paragraph{Setup.}
We deploy our tracking server in two Amazon EC2 regions: {\tt eu-south-1} (IPv4 and IPv6) and {\tt us-west-2} (IPv4 only). Each server is a {\tt t3.small} instance, with 2 vCPU cores, 2GB of RAM and 5Gbps network link.

We tested three Ubuntu 20.04 Linux client devices: (i) HP Spectre x360 laptop (Intel Core i7-7500U CPU with 16GB of RAM) with kernel v5.13.19; (ii) ASUS UX393EA laptop (Intel Core i7-1165G7 CPU with 16GB RAM)  with kernel v5.15.11; and (iii) Intel NUC7CJYH mini-PC (Intel Celeron J4005 CPU, 8GB RAM) with kernel v5.15.8.

\subsection{Browsers}
We demonstrate that we get a consistent device ID with our client snippet on Google Chrome 
(v96.0.4664.110) and Mozilla Firefox (v96.0) browsers (the latest versions at the time of writing). Since Chrome dominates the browser market, our optimizations are geared towards it.

We tested Chrome with our two tracking servers, both on IPv4 and IPv6. We verified that we get a consistent device ID across multiple tabs and browser modes, i.e., regular mode and incognito mode (one of the goals of this mode is to bolster privacy by thwarting online trackers). 
For Firefox, we verified that the modified tracking technique as depicted in \autoref{sec:firefox} works over the Internet and that the ID is identical to the one obtained via Chrome (cross-browser consistency).

\subsection{Networks, NATs, VPNs and Containers}
Our attack targets the client device, which operates in a variety of environments. It might access the Internet via a VPN, run the browser in a container or use a network behind a NAT. This section evaluates whether, and to what extent, these environments affect our attack.

\paragraph{Networks and NATs.} We tested our attack on multiple networks, on both landline and cellular ISPs, with IPv4 and IPv6. \autoref{tab:tested-networks} summarizes our results. Our technique yielded a consistent ID for all tested networks which support IPv6. With IPv4, we found that some networks rewrite the TCP source port value (probably due to in-path port-rewriting NATs). Since our attack relies on observing the device-generated TCP source ports, it failed to obtain an ID on such networks. For IPv4 networks that do not rewrite TCP source ports, and for a given device, we got a consistent device ID, identical to the ID obtained from IPv6 networks for the same device (cross-network consistency, including cross IPv4/IPv6 consistency).

\begin{table}[t]
\centering
\begin{threeparttable}
\resizebox{1.0\columnwidth}{!}{
\begin{tabular}{l|l|l|l}
\toprule
Network                             & \begin{tabular}{@{}l@{}}Port \\ Rewriting?\end{tabular}  & Throttling?   & IPv4 / IPv6                  \\ \midrule
EduRoam
& No                & No            & \checkmark / NA                  \\ 
University Guest
& Yes               & No            & \xmark / NA               \\ 
Landline ISP 1 
& No                & No            & \checkmark / \checkmark                       \\ 
Landline ISP 2 
& No                & No            & \checkmark / NA                  \\ 
Landline ISP 3 
& No                & No            & \checkmark / NA                   \\ 
Landline ISP 4 
& No                & IPv4 only     & \checkmark\tnote{*} / \checkmark                 \\ 
Cable ISP 1 
& Yes               & No            & \xmark / NA                   \\ 
Cellular ISP 1 
& IPv4 only         & IPv4 only            & \xmark / \checkmark                         \\ 
Cellular ISP 2 
& IPv4 only         & IPv4 only            & \xmark / \checkmark                          \\ 
Cellular ISP 3 
& No                & Yes           & \checkmark\tnote{*} / NA                 \\ 
\bottomrule
\end{tabular}
}
\begin{tablenotes}\footnotesize
\item[*] With slowed-down TCP SYN bursts.
\end{tablenotes}
\end{threeparttable}
 \caption{
 Tested networks
}
\label{tab:tested-networks}
\end{table}

NATs are generally deployed on customer premise equipment (CPE) or at the ISP; the latter is often referred to as carrier-grade NAT (CGN). Importantly for our attack, many NAT implementations preserve the clients' TCP port selection.
Mandalari et al. \cite[Table I]{natwatcher} showed that 78\% of the tested NATs preserve TCP source ports. Their study covers over 280 ISPs that used 120 NAT vendors. Richter et al. \cite{multi-perspective-cgn} found that 92\% of the CPE NATs they have identified in non-cellular networks preserve TCP source ports. For CGN deployments, Richter et al. found that on cellular networks, port preservation is less common: about 28\% of the cellular networks with CGNs exhibit such a behavior. Among the non-cellular ISPs that use CGNs, 42\% preserve ports. These measurements are in-line with our tested networks in \autoref{tab:tested-networks}.

To further assess the applicability of our attack, we deployed servers on four cloud providers (Azure, AWS, Google Cloud, and Digital Ocean) across 25 different regions and tested whether the network rewrites TCP source ports under IPv4.  (IPv6 connections are less likely be NATed, as illustrated by our measurements in \autoref{tab:tested-networks}, since IPv6 addresses are abundant.) 
Our experiment shows that all tested regions and cloud providers do \emph{not} rewrite TCP source ports. We list the cloud providers and regions we tested in \autoref{sec:cloud-providers-experiment}.

Another issue we faced, mainly with cellular IPv4 ISPs, is traffic throttling (see \autoref{tab:tested-networks}). Such networks limit the packet rate of our TCP SYN bursts by dropping some packets. This may be due to traffic shaping or security reasons (SYN flood prevention). To address this problem, we spread our TCP SYN bursts over a longer time, thus increasing the overall dwell time of our technique.

\paragraph{VPNs.} We tested our technique with a client device that is connected to an IPv4 VPN. We examined two popular VPN providers: TunnelBear VPN and ExpressVPN. In both cases, we set up a system-wide VPN configuration using Ubuntu's Network Manager. We note that the vast majority of VPNs (in particular, the two VPNs we tested) do not support IPv6 \cite{IPv6-NAT}.

We tested two locations with TunnelBear VPN, Germany and Finland, and found that the TCP source ports are preserved. However, TunnelBear's exit nodes throttle the outbound VPN traffic, so we expect the attack to work with slowed-down bursts. 
For ExpressVPN, our attacks succeeded on 7 out of 10 exit nodes tested in North America and Western Europe. The failed attempts were due to TCP port rewriting. The device ID we obtained through VPNs was identical to the one obtained when the device was using a regular network, and the dwell time we experienced with ExpressVPN was comparable to a regular network with the same RTT. We conclude that, in many cases, VPN exit nodes do not rewrite TCP source ports, which allows our technique to work. This demonstrates that VPNs do not inherently protect against our technique. 

\paragraph{Linux containers.} We deployed two docker containers on the same host and ran our Python client implementation on each. Both runs produced the same device ID, identical to the host device ID, as expected (cross-container consistency). We conducted the experiment with {\tt containerd} version 1.4.12, {\tt runc} version 1.0.2 and {\tt docker-init} version 0.19.0.

\subsection{Active Devices}
In this experiment, we demonstrate that a consistent device ID is obtained when the client simultaneously visits other websites during the attack. To this end, we opened multiple tabs on Chrome during the attack and visited sites that are listed under Alexa's Top 10 Sites. On each test, we arbitrarily chose 3 to 4 websites from the list (this includes ``resource-heavy'' sites such as Yahoo, YouTube, and QQ). In all of our tests, we verified that we get a consistent device ID, concluding that our technique successfully withstands organic TCP connections generated by the victim device during the attack.

\subsection{Dwell Time}\label{sec:dwell-time}
The dwell time is the execution time of our attack, the sum of phase 1 and 2 completion times. Phase 1 completion time is affected by the number of iterations it takes to collect $T$ unique attacker 3-tuples. The measured average is 15 iterations, which is expected to be slightly higher than the theoretical average of 13.8 iterations we computed in \extref{sec:phase1-analysis} due to the Linux injected noise (see \autoref{sec:linux}). The network round trip time (RTT) to the tracking server also affects phase 1's completion time since, at the end of each iteration, the client contacts the server to determine which of the attacker 3-tuples tested in this iteration are unique. 
Phase 2's completion time is mainly affected by the number of loopback groups tested. In our implementation, we tested a fixed number of 64 loopback groups. The RTT has significantly less impact on phase 2 since we do not wait for server responses.

We measured the dwell time for our Chrome client when using $\alpha=4$ and $\beta=50$ (see \autoref{subsec:device-tracking-further-improvements}) against our two tracking servers. 
With an average RTT of 50ms, we measured an average dwell time of 7.4 seconds. With an average RTT of 275ms, we measured an average dwell time of 13.1 seconds. Overall, our results for Chrome show a dwell time of 5-15 seconds (10 seconds on average).

For Firefox, the dwell time is on the order of several minutes, even under lab conditions, because of in-browser throttling. For this reason, our implementation uses more balanced values for $\alpha,\beta$: $\alpha=2$ and $\beta=10$. Lowering the $\beta$ value for Firefox has an advantage since the number of connections to the loopback interface (on phase 2) decreases, hence making the attack run faster under throttling. We have not attempted optimizing our Firefox implementation further.

\subsection{Android}\label{sec:android}
At the time of writing, there is no Android device that uses DHPS (see the discussion on \autoref{sec:rfc6056}). To verify that the attack is applicable to Android, we manually introduced the DHPS code into the 5.4 kernel of a Samsung Galaxy S21 device. (Linux's DHPS implementation is conveniently located in one file, making this change self-contained.)

During our experiments with the Samsung device, we observed a {\tt netfilter} rule that limits the rate of incoming new TCP SYN packets to an average of 50 per second. This rule limits our attack in phase 2 when many connections to the loopback interface are attempted (an outgoing TCP SYN packet in the loopback interface eventually becomes an incoming TCP SYN packet, which is subject to this rule). To work around this restriction, we modified the $\alpha,\beta$ parameters of the attack to $\alpha=2$ and $\beta=10$. This results in $10+2\times10=30$ loopback connections being attempted for each loopback group, which is below the limit imposed by this rule. 
With this configuration, our lab experiments show that the attack on Chrome yields a consistent device ID when the device switches networks, with a dwell time of 18-21 seconds.

We presume that the offending rule (causing us to adjust the attack) is not general to Android but rather, it is Samsung-specific since we 
did not find evidence of it in the Android Open Source Project (AOSP) code.

\subsection{Resource Consumption}
\paragraph{Server side CPU utilization.}
The tracking server needs very little CPU resources. In both phases, the server-side calculation simply involves going over the iteration data and finding the port pairs in which the difference is above some threshold.

\paragraph{Server side RAM consumption.} 
While computing an ID for a device, the tracking server needs to keep (i) a list of allocated destination ports; and (ii) the most recently observed source port per each destination port. In the worst case, the list can be up to $2T$ pairs of ports (with $T=256$, the maximum list length is 512). Each port number is 16 bits, so each port pair takes 4 bytes, and overall the server needs up to $8T=2$KB per actively tested device. Suppose the device dwell time is $W$ seconds, and the rate of devices per second incoming funnel is $R$, then at a given moment, there will be $R\cdot W$ tested devices, which requires $8T\cdot R \cdot W$ bytes.
For Linux-based devices using Chrome, we have $T=256, W=10s$, so the tracking server RAM consumption is $R \cdot 20$KB per device. For example, if $R=10^6$ devices/sec, the server needs 19.07GB RAM. Keep in mind that $R$ does not represent the total number of devices the server needs to support, it is only the rate at which the server is required to measure devices (which is much lower than the total number of supported devices).

\paragraph{Client side CPU utilization.} 
In our experiments, CPU utilization by our tracking logic was negligible.

\paragraph{Client side RAM consumption.}
Most of the client-side RAM consumption imposed by our tracking snippet is due to attempting to create TCP connections. In the attack, the tracking server responds with TCP RST, prompting the victim’s browser and kernel to release this memory quickly. In an experiment, we measured an overhead of at most 30MB in client memory (tested on Chrome), which is sufficiently low to allow the attack even on low-end devices.

\section{Countermeasures}\label{sec:countermeasures}
The root cause of our attack is the ability to detect hash collisions in $G_{K_2}$ via a shared perturbation table cells. 
The ideal solution would be to create a private perturbation table instance per network namespace and interface. By doing so, we prevent the attacker's ability to detect ``stable'' hash collisions (on the loopback interface). 
The problem with this solution is that its memory consumption could be high when many containers are spawned, or many interfaces are present. 

To mitigate the attack when the perturbation table is shared across interfaces, DHPS must ensure that either hash collisions are much less frequent or that detecting collisions is much more difficult. In line with the above, we propose below modifications to DHPS and summarize how they were applied to the Linux kernel in a recent patch to mitigate our attack. We also analyze in \autoref{app:algorithm5} an alternative algorithm proposed by RFC 6056, ``Random-Increments Port Selection Algorithm'' (Algorithm 5, \cite[Section 3.3.5]{rfc6056}), showing that the trade-off it offers cannot simultaneously meet the functionality and security goals from ~\cite[Section 3.1]{rfc6056} (see ~\autoref{sec:rfc6056}). 

\ignore{
To mitigate the attack when the perturbation table is shared across interfaces, DHPS must ensure that either hash collisions are much less frequent or that detecting collisions is much more difficult. We note that RFC 6056 proposes an alternative algorithm ``Random-Increments Port Selection Algorithm'' (Algorithm 5, \cite[Section 3.3.5]{rfc6056}). We analyze this algorithm in \autoref{app:algorithm5} and find that the trade-off it offers cannot simultaneously meet the functionality and security goals. 

In line with the above, we propose below modifications to DHPS 
\begin{revision}and summarize how they were applied to the Linux kernel in a recent patch to mitigate our attack\end{revision}:
}

\paragraph{Increase the table size $T$.} This makes hash collisions much less frequent. For example, instead of $T=256$ we can use $T=256K=262,144$. This consumes 0.5MB of RAM (assuming each table entry is 16 bits). The attack now takes $\times1024$ time due to the need to cover all $T$ table cells. (In the patch issued for Linux, $T$ was increased to $64K$.)

\paragraph{Periodic re-keying.} Changing the secret key in DHPS results in a different table index being accessed and a different port offset from which candidate enumeration will begin. After re-keying, any table collision information previously obtained by an attacker becomes useless. The trade-off is that a source port that was previously chosen 
for the same 3-tuple could be chosen again. This might prevent a TCP connection from being established (see \autoref{sec:rfc6056}). To reduce the chance for connectivity issues, re-keying should not be performed too frequently. (In the patch issued for Linux, re-keying is performed every 10 seconds, 
balancing functionality and security.)

\paragraph{Introduce more noise.} The Linux kernel team also increased the noise in perturbation table cell increments to make detecting collisions more difficult. Now, each increment is an integer chosen uniformly at random between 1 and 8.

\subsection{Network Security Measures}
Besides mitigating our attack at its core (the DHPS algorithm), network security appliances such as firewalls, IDS, or IPS could thwart it to some degree. Since our attack requires a non-negligible number of connection attempts to the same set of attacker destinations, a network security appliance could detect this condition and limit future connection attempts to those destinations. This would require the attacker to slow the rate of connection attempts, presumably to the point where the attack becomes impractical. Crucially, however, by doing so, the security appliance might flag legitimate traffic as malicious: for example, when multiple users behind NAT enter a resource-heavy website in a short time span. 

A better mitigation strategy would be to use \emph{on-host} logic, which can be part of a {\em host} IPS (HIPS) solution, a personal firewall, or in-browser security logic. The on-host logic can detect the internal loopback traffic that is generated as a result of our attack and rate limit TCP connection attempts to {\em closed} ports on the loopback interface. We do not expect standard applications to exhibit similar behavior, making this countermeasure effective and with a low false-positive rate.

\section{Conclusion}
This paper illustrates a flaw in the DHPS algorithm for selecting TCP source ports -- an algorithm which was proposed by RFC 6056 and recently adopted by Linux. We exploit this algorithm to track Linux devices across networks, browsers, browser privacy modes, containers, and VPNs. 
The key observation of this paper is that the attacker can detect {\em collisions} in the output of a keyed hash function used in DHPS via sampling, in an attacker-prescribed manner, TCP source ports generated from the global kernel state induced by DHPS. By indirectly observing such collisions for {\em loopback} 3-tuples inputs, the attacker calculates an invariant device ID (agnostic to the network the device is connected to) since it only relies on the secret hashing key which is generated at system startup. 
Interestingly, this result does not rely on the choice of the hash function, because the hash function's output space is small enough for collisions to happen naturally even in a very moderate number of TCP connections.
We implement the attack for Linux devices and Chrome and Firefox browsers, and demonstrate it across the Internet in practical, real-life scenarios. We propose changes to the DHPS algorithm that mitigate the root cause of our technique. Lastly, we worked with the Linux kernel team to integrate countermeasures into the Linux kernel, which resulted in a recent security patch.
\section{Vendor Status}\label{sec:vendor-status}
We reported our findings to the Linux kernel security team on February \nth{1}, 2022.
In response, the Linux team developed a security patch which was incorporated in versions 5.17.9 (and above) and 5.15.41 (the LTS versions that include DHPS). The Linux kernel issue is tracked as CVE-2022-32296.

\section{Availability}\label{sec:availability}

Our code is publicly available online, along with instructions for reproducing
our results, in \url{https://github.com/0xkol/rfc6056-device-tracker}.

\section*{Acknowledgements}
We thank the anonymous Usenix 2022 reviewers for their thoughtful comments, suggestions and feedback.  This work was supported in part by the Hebrew University cyber security research center. Yossi Gilad was partially supported by the Alon fellowship.

\newpage
\bibliographystyle{plain}

\IfFileExists{./main.bbl}
{

}
{
    \bibliography{akbib8,delta9}
}

\newpage
\appendix
\appendixpage

\exttext{}{
\section{Analysis of the Runtime of Our Attack Against DHPS}\label{app:analysis}
Our basic work unit (or time consumption unit) is a burst of up to $T$ packets. This is because the algorithm works by sending bursts, waiting for the burst to be received, sampling, sending another burst, etc.
The time it takes to send the burst is typically lower than the RTT, which is why the burst size has little impact on the overall runtime.

\subsection{Calculating the Distribution of \texorpdfstring{$\mu_r$}{mu[r]}}

We begin with calculating the distribution of $\mu_r(l)$ -- given $T$ bins, we define the variable $\mu_r(l)$ as the number of bins that have exactly $r$ balls in them, after $l$ balls are thrown at random into the bins. In later sections we will need the distribution for $\mu_0$ (also known as the ``occupancy problem'') and $\mu_1$, but our calculation technique is generic and can be used to accurately and efficiently calculate any $\mu_r$. 
For the special case $r=0$ (known as the ``occupancy problem''),  there is an explicit description (following the notation in \cite{doi:10.1080/00031305.2019.1699445}): $Prob(\mu_0(l)=k)=\mathit{Occ}(T-k|l,T)=\frac{1}{T^l}\binom{T}{k}\sum_{i=0}^{T-k}\binom{T-k}{i}(-1)^i((T-k)-i)^l$
. However, we are not familiar with any explicit description of this distribution for $r>0$, though there is literature that describes moments of the distribution, and some limits \cite{nla.cat-vn867984}. Here we describe an algorithm that calculates the distribution accurately and efficiently for any $T,l,r$. Our technique for calculating $\mu_r$ provides $\mu_{r'}$ for $r'<r$ as a by-product.

The idea is to calculate the probability $\mu_l^*(i_0,i_1,\ldots,i_r)$ of $T$ bins to have exactly $i_s$ bins with $s$ balls in them (where $\sum_{s=0}^r i_s \leq T$) after $l$ balls are thrown. We start with $\mu_0^*(T,0,\ldots,0)=1$ and $\mu_0^*=0$ elsewhere. We now look at a transition from $(i_0,\ldots,i_r)$ after $l-1$ balls. For $s \leq r$, the $l$-th ball is thrown into an a bin with $s$ balls with probability $\frac{i_s}{T}$, thus contributes $\mu_{l-1}^*(i_0,\ldots,i_s,i_{s+1},\ldots,i_r)\frac{i_s}{T}$ to $\mu_{l}^*(i_0,\ldots,i_s-1,i_{s+1}+1,\ldots,i_r)$. Conversely, the $l$-th ball is thrown into a bin with more than $r$ balls with probability $\frac{T-\sum_{s=0}^r i_{s}}{T}$, thus contributes $\mu_{l-1}^*(i_0,\ldots,i_r)\frac{T-\sum_{s=0}^r i_{s}}{T}$ to $\mu_{l}^*(i_0,\ldots,i_r)$.
The vector of probabilities $\mu^*_l$ is then calculated by going over all $(i_0,\ldots,i_r)$ events for ball $l-1$ and summing their contributions to the various $(i'_0,\ldots,i'_r)$ events for ball $l$. Specifically, we can write:

$$\mu^*_l(i_0,\ldots,i_r)=$$
$$(\sum_{j=0}^{r-1}\mu^*_{l-1}(i_0,\ldots,i_j+1,i_{j+1}-1,\ldots,i_r)\frac{i_j+1}{T})+$$
$$\mu^*_{l-1}(i_0,\ldots,i_{r-1},i_r+1)\frac{i_r+1}{T}+$$
$$\mu^*_{l-1}(i_0,\ldots,i_r)\frac{T-\sum_{s=0}^r i_{s}}{T}$$

(where we define $\mu^*_{l-1}(i_0,\ldots,i_r)=0$ if $\sum_{s=0}^r i_s>T$, or $i_s>T$ or $i_s<0$ for any $s$).
Finally, to calculate the distribution of bins with $r$ balls after $l$ balls are thrown, we simply sum all the events which have exactly $k$ bins with $r$ balls:
$$Prob(\mu_r(l)=k)=\sum_{i_0,\ldots,i_{r-1},i_r=k}\mu^*_l(i_0,\ldots,i_r)$$ 
By the same notion, we also have, for $r'<r$:
$$Prob(\mu_{r'}(l)=k)=\sum_{i_0,\ldots,i_{r'-1},i_{r'}=k,i_{r'+1},\ldots,i_r}\mu^*_l(i_0,\ldots,i_r)$$ 

The number of $(i_0,\ldots,i_r)$ combinations is exactly the number of combinations of  $r+2$ non-negative numbers that add up to $T$, which is $\binom{T+r+1}{r+1}$. Thus the memory complexity is $\binom{T+r+1}{r+1}$ (for fixed $r$, this is $O(T^r)$). Since each event at iteration $l$ involves a calculation based on (up to) $r+2$ events at iteration $l-1$, the runtime complexity is $l(r+2)\binom{T+r+1}{r+1}$ (for fixed $r$ and $l$, this is $O(T^r)$). Specifically, for $r=1$, we have memory complexity of $\frac{(T+2)(T+1)}{2}$ and runtime complexity of $1\frac{1}{2}(T+2)(T+1)l$. Likewise, for $r=0$ (the occupancy problem), we have runtime complexity of $2(T+1)l$ and memory complexity of $T+1$.

\subsection{Analysis of Phase 1}
\label{sec:phase1-analysis}
\subsubsection{Optimal size of new candidate batch}
Note: for simplicity, herein we ignore the (rare) events wherein out-of-order transmission cause an attacker 3-tuple $y$ (which does not collide with any of the attacker 3-tuples in $S'_{i-1}$) to be accepted into $S'_i$ even though other attacker 3-tuples in $S_i$ collide to its cell. In other words, we assume that for $y$ to be accepted into $S'_i$, $y$'s cell must not collide with the cells of both $S'_{i-1}$ {\em and} $S_i \setminus \{y\}$. 

Suppose we already have $n$ cells (out of $T$) already covered from previous iterations, i.e. $|S'_{i-1}|=n$. We want to test $k=|S_i|$ new external destinations in the current iteration, and maximize the expected additional cells we cover, namely new cells which contain exactly one new external destination. For this, we define the variable $G(n,k)=|S'_i \setminus S'_{i-1}|$. The question is, therefore, which $k$ maximizes $E(G(n,k))$ (the {\em optimal} $k$), and whether it depends on $n$.

We first calculate the probability of a single cell to contribute to $G(n,k)$, i.e. for a cell to become filled with exactly one external destination (more accurately -- attacker 3-tuple, but we use it here interchangeably). For this, the cell must be initially empty (w.r.t. external destinations), which has the probability $\frac{T-n}{T}$. Now, since we have $k$ new attacker destinations, The probability of a particular cell to contain a single {\em new} attacker destination is the binomial term, $\binom{k}{1}(\frac{1}{T})^1(1-\frac{1}{T})^{k-1}$. Therefore, the probability of an arbitrary cell to contribute to $G(n,k)$ is $\frac{T-n}{T}\binom{k}{1}(\frac{1}{T})^1(1-\frac{1}{T})^{k-1}$.
We define a variable $X_i$ which is 1 if the cell $i$ is not already covered, and has one new attacker destination, and 0 otherwise, we have $E(X_i)=\frac{T-n}{T}\binom{k}{1}(\frac{1}{T})^1(1-\frac{1}{T})^{k-1}$. We now ask what is the expectancy of $G(n,k)$ -- the number of cells containing a single external destination, out of all $T$ cells. Clearly $E(G(n,k))=E(\sum_i X_i)$. Since expectancy is additive regardless of dependencies, i.e. it is 
$E(G(n,k))=T\frac{T-n}{T}\binom{k}{1}(\frac{1}{T})^1(1-\frac{1}{T})^{k-1}=\frac{T-n}{T}k(1-\frac{1}{T})^{k-1}$.
Indeed, this can be optimized over $k$ without regard to $n$.
The optimal $k$ is obtained when $k(1-\frac{1}{T})^{k-1}$ is minimal, i.e. $\frac{d}{dk}k(1-\frac{1}{T})^{k-1}=0$. This can be easily solved: $k=-\frac{1}{\ln{(1-\frac{1}{T})}}$. More precisely, since $k$ should be integral, We need to choose between $k_L=\lfloor-\frac{1}{\ln{(1-\frac{1}{T})}}\rfloor$ and $k_H=\lceil -\frac{1}{\ln{(1-\frac{1}{T})}}\rceil$ -- the one that provides the higher $E(G(n,k))$. 

For high values of $T$, since $\ln{(1-\frac{1}{T})} \approx -\frac{1}{T}$ (and always $\ln{(1-\frac{1}{T})} < -\frac{1}{T}$) we have $k_L=T-1$ and $k_H=T$. In such a case, it is easy to see that $E(G(n,T))=E(G(n,T-1))$, therefore both $k$ values yield the same result. There is a slight advantage to using less packets, so we choose $k=T-1$. Therefore, our implementation of $\textsc{GetNewExternalDestinations}$ can be as depicted in \autoref{alg:destinations}, wherein the destinations are different ports (starting at port $p_I$) on the same single attacker IP address.

\begin{algorithm}[t]
\caption{Allocate Attacker Destinations}
\label{alg:destinations}
\begin{algorithmic}[1]
\State $i \gets 0$
\Procedure{GetNewExternalDestinations}{}
\State $i \gets i+1$
\State return $\{\mathit{{IP}_{SRV}}:(p_I+(i-1)(T-1)),\ldots,\mathit{IP_{SRV}}:(p_I+i(T-1)-1)\}$
\EndProcedure
\end{algorithmic}
\end{algorithm}

\subsubsection{Distribution Function for the Number of Iterations}
We define the probability distribution for the number of covered cells at the end of iteration $l$ ($l$ starts at 1),  ($|S'_l|$), to be $p_l$. We think of $p_l$ as a vector of $T$ entries, where $p_l[i]$, $0<i\leq T$ is the probability that exactly $i$ cells are covered. We define $P(k)=Prob(\mu_1(T-1)=k)$ as the probability for having exactly $k$ bins with a single ball when throwing $T-1$ balls into $T$ bins.
Define $S^*_l \subseteq S_l$ to be destinations in $S_l$ which do not fall into the same cell with any other destination in $S_l$ (i.e. they belong to a unique cell). If we know that $|S^*_l|=k$, and we know that $|S'_l|=j$, then the probability for $|S'_{l+1}|=i$ (where $i \geq j$) is the probability that exactly $i-j \leq k$ balls (out of $k$) from $S^*_l$ do not fall into $S'_l$. 

To calculate this, we first need to choose the subset of $i-j$ balls that do not fall into $S'_l$. We have $\binom{k}{i-j}$ ways to choose it. Then, the first ball in the subset needs to fall into the $T-j$ bins not covered by $S'_l$, which has the probability $\frac{T-j}{T}$. The second ball in the subset is known to fall into a different bin, hence has a probability of $\frac{T-j-1}{T-1}$ to fall into a bin not covered by $S'_l$, and so forth. The total probability of the $(i-j)$ -sized subset to fall outside $S'_l$ (i.e. in $\bar{S'_l}$, which has $|\bar{S'_l}|=T-j$) is therefore $\frac{(T-j)(T-j-1)\cdots(T-j-(i-j-1))}{T(T-1)\cdots(T-(i-j-1))}$. Now, we also need the complementary subset of $k-(i-j)$ balls to fall into $S'_l$. For this to happen, we clearly require the complementary subset size to be no bigger than $|S'_l|$, i.e. $k-(i-j) \leq j$ which incurs $k \leq i$. Under this assumption, the probability of the complementary subset to fall into $S'_l$ is  $\frac{j(j-1)\cdots(j-(k-(i-j)-1))}{(T-(i-j))\cdots(T-(i-j)-(k-(i-j)-1))}$, and note that the first ball in this subset starts with $T-(i-j)$ possible bins (left from the first subset). Thus, the overall probability  $p_{ijk}$ of the event is (after further algebraic simplification):
$$p_{ijk}=\binom{k}{i-j}\frac{((T-j)\cdots(T-i+1))\cdot (j\cdots(i-k+1))}{T\cdots(T-k+1)}$$
Therefore, for $l>1$, $p_{l+1}[i]=\sum_{j=1}^{\min({i},{T-1})}p_l[j]\sum_{k=i-j}^{i}P(k)p_{ijk}$, and  $p_1[i]=P(i)$. Note that we deliberately do not sum the terms at $j=T$, since we want these to ``disappear'' from the system after the iteration in which they are created. In this way, $\sum_{i=1}^T p_l[i]$ represents the probability of the process to survive $l$ iterations, and is strictly monotonously decreasing, with $\sum_{i=1}^T p_1[i] =1$ and $\sum_{i=1}^T p_\infty[i]=0$. Also note that $p'(l)$, the probability of the process to terminate at iteration $l$ is simply $p_l[T]$, i.e. if we define $\varphi=(0,\ldots,1)$ as the linear functional that extracts the last vector coordinate, then $p'(l)=\varphi\cdot p_l$.
 
We see that $p_{l+1}$ is a linear combination with fixed coefficients of $p_l$, therefore we can write $p_{l+1}=A \cdot p_l$, where $A \in M(T,T)$ and $A_{ij}=\sum_{k=i-j}^{i}P(k)p_{ijk}$ for $j\leq \min(i,T-1)$ and $A_{ij}=0$ elsewhere.

Thus, we have the distribution $p'(l)$ for termination at iteration $l$:
$$p'(l)=\varphi(A^{l-1}P)$$
From this we can calculate the distribution of $l$, and specifically $E(l)$. The distribution function for $l$ is plotted in Fig.~\ref{fig:dist-phase1}. For $T=256$, we calculated $E(l)=13.819116$. 
\begin{figure}[t]
\centering
\begin{center}
\begin{tikzpicture}
\begin{semilogyaxis}[
    xlabel={Iterations ($l$)},
    ylabel={Probability},
    xmin=5, xmax=37,
    ymin=1e-5, ymax=1,
    xtick={5,10,15,20,25,30,35},
    ytick={1e-5,1e-4,1e-3,1e-2,1e-1},
    legend pos=north east,
    ymajorgrids=true,
    grid style=dashed,
]

    \addplot table[col sep=comma] {distribution_T256_CLEAN.txt}; 
    
    \addplot table[col sep=comma] {out.sim_phase1_N=1000000_T=256_distribution_CLEAN.txt}; 
        
    \legend{Theoretic Results,Empiric Results}

\end{semilogyaxis}
\end{tikzpicture}
\end{center}
\caption{Distribution Function for Phase 1 Iterations (only values $\geq 10^{-5}$ are shown)}
\label{fig:dist-phase1} 
\end{figure}
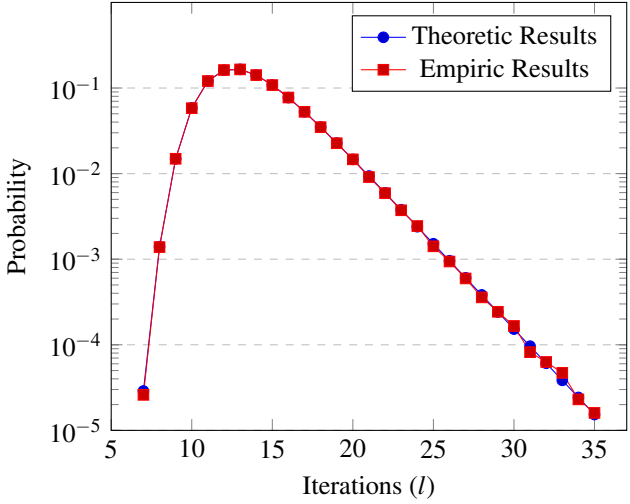

We also simulated Phase 1 for $T=256$ (with $M=1000000$ experiments). The average number of iterations needed was 13.81484, in line with the theoretic result. We also plotted the empiric distribution function in \autoref{fig:dist-phase1}. As can be seen, the empiric results fit the theoretic prediction very tightly.

\subsection{Analysis of Phase 2}
\label{sec:phase2-analysis}
Denote by $N$ the population size we want to support, in the sense that we allow an average of up to $c^*$ device ID collisions in a population of $N$ devices.
For simplicity, we assume that each device among the population of $N$ devices has a unique key ($K_2$). Of course, when $2^{|K_2|} \gg \binom{N}{2}$ this assumption is statistically valid, but the reader should keep in mind that the $K_2$ key space size is an upper bound on the device ID space, namely the device ID space cannot exceed $2^{|K_2|}$ even if we obtain the full list of collisions for the device at hand (note that the device ID space can reach almost $2^{|K_2|}$ with enough samples).

In general, we can add loopback destinations until we get ``enough'' independent pairs. We can define a threshold probability $p^*$ which is the maximal accepted probability for a random device to have the same device ID as the current device. We set $p^*=\frac{c^*}{\binom{N}{2}}$ -- this is an upper bound, since the number of independent collisions may be lower. For every new destination, we update $n$, the total number of pairs, and check the following condition (where $l$ is the number of loopback destinations):
$$P^l_D(n) \leq p^* $$
We stop at the first $l$ that satisfies the above condition. The device ID is then the $n$ independent pairs (and $l$) and the runtime is $l$.\\
Note: in order for the device ID to be well defined, we need to define a canonical selection of independent collisions. This can be done easily, for example, we choose only pairs where one of the destinations is the first one that was associated with the cell. Since the order of associating destinations to cells is deterministic (it is a subset of the order in which the destinations are enumerated), this canonical form is well defined.

Note that $P^l_D(n)$ is monotonously decreasing in $n$ and $l$, when $n \geq 0$ and $1 \leq l \leq T$. Also note that $P^l_D(n) \leq \frac{1}{T^n}$. Therefore, if $\frac{1}{T^n} \leq p^*$, it follows that $P^l_D(n) \leq p^*$. From this we can calculate an upper bound for $n$, which is $n_\mathit{UB}=\lceil-\log_T{p^*}\rceil$. It is important to stress that this is not a tight upper bound. Still, it shows that $n$ is typically very small, even for extremely small $p^*$. 

We want to calculate $l_{max}$ and $E(l)$. We assume $l \leq\lfloor T-\sqrt{T} \rfloor +1$ (we will see below how this upper bound is obtained).

For $l_{max}$, the worst case is $n=0$ (this is only so when $l\leq T$). This results in $P^l_D(0)=\prod_{i=0}^{l-1} (1-\frac{i}{T})$. This expression is monotonously decreasing for $1 \leq l \leq T$. We can therefore find the minimal $l_{max}$ such that $P^{l_{max}}_D(0) \leq p^*$ using binary search. Note that if $T$ is ``too small'' (i.e. if $\frac{T!}{T^T} > p^*$) then a solution may not be found, see \autoref{sec:low-T} for a discussion on this topic.

For $E(l)$, we first define $n^*_l$, which is the minimal $n$ (given $l$) that satisfies $P^l_D(n) \leq p^*$. We can find it using binary search with $0 \leq n \leq \min(n_\mathit{UB},l-1)$. It is easy to see that $n^*_l$ is monotonously decreasing in $l$ (if $P^l_D(n^*_l) \leq p^*$ for $l$, then $P^l_D(n^*_l) \leq p^*$ for $l'>l$, therefore $n^*_{l'} \leq n^*_l$). The termination condition for step $l$ is then simply that $n \geq n^*_l$ ($n$ includes step $l$'s potential collision). Note that this allows a simpler implementation of the algorithm, since it requires simply keeping track of the number of independent collisions, and comparing them to $n^*_l$ (which can be computed once, offline, per $p^*$). This leads to the following observation: \textbf{the algorithm is completely determined by the vector $n^*_l$ (where $l_{min} \leq l \leq l_{max}$, and we can define $n^*_l=\infty$ for $l<l_{min}$)}.\footnote{Technically, the algorithm also depends on $l_{min}$ and $l_{max}$ but these can be easily deduced from the $n^*_l$ series: $l_{min}=\min\{l|n^*_l<\infty\}$ and $l_{max}=\min\{l|n^*_l=0\}$.} In Table~\ref{tab:ns} we provide $n^*_l$ for $N=1000000,c^*=1$.

\begin{table}[t]
\begin{center}
\begin{tabular}{l|l}
\toprule
                     & $n^*_l$ \\ \midrule
$6 \leq l \leq 52$    & 5        \\ 
$53 \leq l \leq 72$  & 4        \\ 
$73 \leq l \leq 87$  & 3        \\ 
$88 \leq l \leq 98$  & 2        \\ 
$99 \leq l \leq 108$ & 1        \\ 
$l = 109$           & 0        \\ \bottomrule
\end{tabular}
\end{center}
\caption{$n^*_l$ for $N=1000000,c^*=1$}
\label{tab:ns}
\end{table}

Consider now the probability of the process to terminate at step $l$. For this, the process must arrive at step $l-1$ but not terminate there. But since $n$ is monotonously increasing in $l$, and $n^*_l$ is monotonously decreasing in $l$, it follows that if the condition for step $l-1$ is {\em not} satisfied, i.e. $n_{l-1}<n^*_{l-1}$ then for every $1 \leq l' < l$, $n_{l'}<n^*_{l'}$. Therefore, failing the condition for step $l-1$ guarantees that the process also did not terminate anytime earlier.

Lemma: For $l \leq \lfloor T-\sqrt{T} \rfloor +1$ and $n+1 \leq l-2$, $P^{l-1}_D(n+1) \leq P^l_D(n)$.\\
Proof: $\frac{P^{l-1}_D(n+1)}{P^l_D(n)}=\frac{1}{T(1-\frac{l-n-2}{T})(1-\frac{l-n-1}{T})} \leq \frac{1}{T(1-\frac{\lfloor T-\sqrt{T}}{T} \rfloor)(1-\frac{\lfloor T-\sqrt{T} \rfloor}{T})} \leq 1$.\\
Corollary 1: Assuming $l_{max}\leq \lfloor T-\sqrt{T} \rfloor +1$, $l \leq l_{max}$ and that $n^*_{l-1}$ exists (finite), then $n^*_{l-1} \leq n^*_l+1$. This is because $P_D^{l-1}(n^*_l+1) \leq P_D^l(n^*_l) \leq p^*$. Note that since $n^*$ is monotonously decreasing, $n^*_{l-1}=n^*_l$ or $n^*_{l-1}=n^*_l+1$.\\
Corollary 2: $n^*_{l_{min}} = l_{min}-1$. Clearly $n^*_{l_{min}} \leq l_{min}-1$. We prove that $n^*_{l_{min}} \geq l_{min}-1$ by contradiction. Suppose $n^*_{l_{min}}<l_{min}-1$ i.e. $n^*_{l_{min}}-1 \leq l_{min}-3$, then $P^{l_{min}-1}_D(n^*_{l_{min}}+1) \leq P^{l_{min}}_D(n^*_{l_{min}})$, and since $P^{l_{min}}_D(n^*_{l_{min}}) \leq p^*$ we have $P^{l_{min}-1}_D(n^*_{l_{min}}+1) \leq p^*$. Thus we can find $n^*_{l_{min}-1} \leq n^*_{l_{min}}+1$, in contradiction to the minimality of $l_{min}$.

We will henceforth assume $l \leq \lfloor T-\sqrt{T} \rfloor +1$. Again, this does not hold for ``small'' $T$ values, i.e. when $P^{\lfloor T-\sqrt{T} \rfloor +1}_D(0)>p^*$.

We will also need to calculate the probability of $l$ loopback destinations to have exactly $n$ independent collisions. This is equivalent to asking, when throwing $l$ balls into $T$ bins, what is the probability for having exactly $m$ empty bins. The number of independent collisions is then $n=l-(T-m)$. We define, therefore:
$$p_l(n)=Prob(\mu_0(l)=T+n-l)$$

We now calculate $p'(l)$, the probability to terminate at step $l$, based on the relationship between $n^*_l$ and $n^*_{l-1}$. Specifically, it is easy to see that $p'(1)=0$ (also for $l=1$, we only have $n=0$ and thus $p_1(0)=1$ and $P^1_D(0)=1$). Since there are only three options for $n^*_{l-1}$: undefined (when $l=l_{min}$), $n^*_l$ or $n^*_l+1$, we can calculate $p'(l)$ as follows:
\begin{itemize}
    \item If $l=l_{min}$ then $n^*_l=l_{min}-1$. Since in the previous round $l'=l_{min}-1$ the maximum number of pairs is $l'-1=l_{min}-2$, then for the algorithm to terminate, we must have $n_{l'}=l_{min}-2$ pairs already, and then a new pair is must be added. Therefore:
    $$p'(l_{min})=p_{l_{min}-1}(l_{min}-2)\frac{(l_{min}-1)-(l_{min}-2)}{T}=$$
    $$\frac{p_{l_{min}-1}(l_{min}-2)}{T}$$
    \item If $l>l_{min}$ and $n^*_l=n^*_{l-1}$ then in order to terminate at step $l$ but not at step $l-1$, we need to have $n_{l-1}=n^*_{l-1}-1$ (note that $n^*_{l-1}>0$, or else we would have stopped at $l-1$, therefore $n_{l-1} \geq 0$, this is captured in the fact that $p_{l-1}(\cdot)=0$ for negative arguments). We also need a new collision so that $n_l=n^*_l$, whose probability is simply $\frac{(l-1)-(n^*_{l-1}-1)}{T}$. The probability in this branch is, therefore:
    $$p'(l)=p_{l-1}(n^*_{l-1}-1)\frac{(l-1)-(n^*_{l-1}-1)}{T}$$
    \item If $l>l_{min}$ and $n^*_l=n^*_{l-1}-1$, then in order to terminate at step $l$ but not at step $l-1$, we need to have either $n_{l-1}=n^*_{l-1}-1$ or $n_{l-1}=n^*_{l-1}-2$ to fail the termination condition for step $l-1$. If $n_{l-1}=n^*_{l-1}-1$ then it does not matter whether there is a collision or not at step $l$, because the condition for step $l$ is immediately met. 
    This event has probability $p_{l-1}(n^*_{l-1}-1)$. If $n_{l-1}=n^*_{l-1}-2$ (this is impossible if $n^*_{l-1}-2$ is negative, i.e. if $n^*_{l-1}=1$, but this is captured in the fact that $p_{l-1}(\cdot)=0$ for negative arguments), then we also need a new collision so that $n_l=n^*_l=n^*_{l-1}-1$. This event has probability $p_{l-1}(n^*_{l-1}-2)\frac{(l-1)-(n^*_{l-1}-2)}{T}$. The total combined probability is:
    $$p'(l)=p_{l-1}(n^*_{l-1}-1)+$$
    $$p_{l-1}(n^*_{l-1}-2)\frac{(l-1)-(n^*_{l-1}-2)}{T}$$
\end{itemize}

We can define $l_{min}$, which is the minimal $l$ for which there is $n \leq l-1$ such that $P_D^l(n) \leq p^*$. Since $P_D$ is monotonously decreasing in $n$, the $n$ that minimizes it is $n=l-1$, thus $l_{min}$ is minimal such that $P_D^l(l-1) = \frac{1}{T^{l-1}}\leq p^*$. Therefore $l_{min}=\lceil 1-\log_T{p^*} \rceil$. It is then guaranteed that $p'(l)=0$ for $l<l_{max}$, i.e. the algorithm will not terminate for $l<l_{min}$.

When $n \geq n^*_l$, the process terminates. Note that inequality can happen when $n^*_l=n^*_{l-1}-1$ and $n_{l-1}=n^*_{l-1}-1$. In such case, another collision may be added in step $l$, ending with $n_l=n^*_{l-1}=n^*_l+1$.

Note that for the above calculated $l_{max}$, $n^*_{l_{max}}=0$, so this guarantees that the algorithm will terminate at most after $l_{max}$ steps.

Based on the above analysis, we can calculate $E(l)=\sum_{l=2}^{l_{max}} p'(l)\cdot l$.

Of interest is also the average collision count $c$ across the entire population. In the above, we bounded $c$ from above: $c \leq c^*$, since $c^*$ represents the average collision count when each and every device contributes a probability up to $p^*$.
If we extend the definition of $p'$ and define $p'(l,n)$ to be probability to terminate at step $l$ with $n$ pairs, then it is easy to see that the probability $\bar{p}$ of two random devices to have the same device ID is:
$$\bar{p}=\sum_{l,n}p'(l,n)P^l_D(n)$$
Therefore:
$$c=\binom{N}{2}\bar{p}=\binom{N}{2}\sum_{l,n}p'(l,n)P^l_D(n)$$

To calculate $c$, we need $p'(l,n)$ and $P^l_D(n)$. We already calculated $P^l_D(n)$ in \autoref{subsec:terminating}. We now calculate $p'(l,n)$: 
\begin{itemize}
    \item If $l=l_{min}$ then only $n=l-1$ is possible (see above), and we have:
    $$p'(l_{min},l_{min}-1)=p'(l_{min})$$ 
    \item If $l>l_{min}$ and $n^*_l=n^*_{l-1}$ then only $n=n^*_l$ is possible, and we have:
    $$p'(l,n^*_l)=p'(l)$$ 
    \item If $l>l_{min}$ and $n^*_l=n^*_{l-1}-1$, then $n$ can be either $n^*_l$ or $n^*_l+1$. The case of $n=n^*_l+1$ can only happen when there were $n^*_l=n^*_{l-1}-1$ pairs in the $(l-1)$-th round, and another pair was added in the $l$-th round, i.e.: $$p'(l,n^*_l+1)=p_{l-1}(n^*_{l-1}-1)\frac{(l-1)-(n^*_{l-1}-1)}{T}$$
    The case of $n=n^*_l$ can arise from two separate events: it is possible to have $n^*_l=n^*_{l-1}-1$ pairs in the previous round, and add no new pairs, or to have $n^*_l-1=n^*_{l-1}-2$ pairs in the previous round, and add a new pair. Therefore:
    $$p'(l,n^*_l)=p_{l-1}(n^*_{l-1}-1)(1-\frac{(l-1)-(n^*_{l-1}-1)}{T})+$$
    $$p_{l-1}(n^*_{l-1}-2)\frac{(l-1)-(n^*_{l-1}-2)}{T}$$
    
\end{itemize}
    
\subsubsection{Simulation Results}
We simulated the algorithm for population sizes $N=10^i$ (where $i=2,\ldots,12$). For each population size, we ran 500 simulations of an entire $N$ devices population, i.e. we sampled $500N$ $l$ values, and 500 $c$ values. The results are summarized in Table~\ref{tab:phase2-analysis-table}. As can be seen, the simulation results closely match our analysis.

\begin{table*}[t]
\centering
\begin{tabular}{l|l|l|l|l|l|l|l}
\toprule
$N$      & $p^*$                  & $l_{min}$ & $l_{max}$ & $E(l)$ & $\nicefrac{c}{c^*}$ & Sim. $E(l)$ & Sim. $\nicefrac{c}{c^*}$ \\ \midrule
$10^{2}$ & $2.020\times 10^{-4}$  & 3         & 64        & 30.027 & 0.16872                   & 30.053      & 0.164                          \\ 
$10^{3}$ & $2.002\times 10^{-6}$  & 4         & 78        & 35.151 & 0.37544                   & 35.153      & 0.348                          \\ 
$10^{4}$ & $2.000\times 10^{-8}$  & 5         & 90        & 39.261 & 0.35504                   & 39.261      & 0.334                          \\ 
$10^{5}$ & $2.000\times 10^{-10}$ & 6         & 100       & 44.899 & 0.19315                   & 44.899      & 0.21                           \\ 
$10^{6}$ & $2.000\times 10^{-12}$ & 6         & 109       & 49.496 & 0.24641                   & (UNTESTED)  & (UNTESTED)                     \\ 
$10^{7}$ & $2.000\times 10^{-14}$ & 7         & 117       & 53.01  & 0.33046                   & (UNTESTED)  & (UNTESTED)                     \\ 
$10^{8}$ & $2.000\times 10^{-16}$ & 8         & 124       & 56.6   & 0.2851                    & (UNTESTED)  & (UNTESTED)                     \\ 
$10^{9}$ & $2.000\times 10^{-18}$ & 9         & 131       & 60.354 & 0.26165                   & (UNTESTED)  & (UNTESTED)                     \\ 
$10^{10}$ & $2.000\times 10^{-20}$ & 10 & 137 & 63.843 & 0.25072 & (UNTESTED) & (UNTESTED) \\ 
$10^{11}$ & $2.000\times 10^{-22}$ & 11 & 143 & 66.891 & 0.27247 & (UNTESTED) & (UNTESTED) \\ 
$10^{12}$ & $2.000\times 10^{-24}$ & 11 & 149 & 69.917 & 0.2659 & (UNTESTED) & (UNTESTED) \\ \bottomrule
\end{tabular}
\caption{Phase 2 Analysis vs. Simulated Results ($c^*=1$)}
\label{tab:phase2-analysis-table}
\end{table*}

\subsection{Notes About Low \texorpdfstring{$T$}{T} Values}
\label{sec:low-T}

The above analysis assumes that $T$ is ``large enough'', i.e. that we will always get enough information for some $l \leq T$. We now explicitly calculate the threshold $T$ for this condition to hold.
Let us denote $l^*=\lfloor T-\sqrt{T} \rfloor +1$. If $P^{l^*}_D(0)>p^*$ then we cannot assume the algorithm to terminate with $l^*\leq \lfloor T-\sqrt{T} \rfloor +1$. 
We have:
$$P^{l^*}_D(0) = \frac{T\cdot(T-1)\cdots(T-l^*+1)}{T^{l^*}}=$$
We  use Stirling's approximation to obtain:
$$\frac{T!}{T^T}\frac{(T-l^*)^{T-l^*}}{(T-l^*)!}\Big(\frac{T}{T-l^*}\Big)^{T-l^*} \approx$$ $$\sqrt{2\pi T}e^{-T}\frac{e^{T-l^*}}{\sqrt{2\pi (T-l^*)}}\Big(\frac{T}{T-l^*}\Big)^{T-l^*}$$
We can also approximate $T-l^* \approx \sqrt{T}$:
$$P^{l^*}_D(0) \approx \sqrt[4]{T}e^{-l^*}\sqrt{T}^{\sqrt{T}}=e^{-(T-\sqrt{T}-\frac{1}{2}\sqrt{T}\ln{T}-\frac{1}{4}\ln{T})}$$

Therefore, for the analysis to be correct, we require that
$T-(1+\frac{1}{2}\ln{T})\sqrt{T}-\frac{1}{4}\ln{T} \geq -\ln{p^*}$.

Since $p^*$ is typically set to e.g. $\frac{2}{N^2}$, it follows that we typically need $T-(1+\frac{1}{2}\ln{T})\sqrt{T}-\frac{1}{4}\ln{T} \geq 2\ln{N}-\ln{2}$. We use $N=10^{12}$ as an upper bound (over $100 \times$ the current world population); the requirement then translates into $T\geq 86$. 

This does not mean low $T$ systems cannot be attacked. Quite the contrary. We simply do not confine ourselves to $l\leq T-\sqrt{T}+1$. This may complicate the analysis in general, but we can still easily analyze the extreme case $T=2$. In this case, a device measured provides exactly $L-1$ bits of information when $L$ loopback addresses are used: each one of the $L-1$ loopback addresses after the first either collides with the first or not, contributing exactly one bit. Therefore, we have $P^L_D=2^{-(L-1)}$ and given $p^*$, we need $L=\lceil -\log_2{p^*} \rceil+1$ iterations to satisfy the requirement. For example, with $N=10^6$, we have $p^*=2.000\times 10^{-12}$ and $L=40$, which is in fact considerably better than $T=256$ (whose $E(l)=49.496$).

}

\section{Another Use Case: Traffic Measurement}
\label{sec:traffic-management}
It is possible to count how many outbound TCP connections are established by a device in a time period using the above techniques. This is useful for remote traffic and load analysis, e.g. to compare the popularity of services. For example, it is possible to remotely measure the rate at which outbound TCP connections are opened by a forward HTTP proxy. This can be used to estimate how many concurrent clients the HTTP forward proxy serves. And quoting \cite{Linux-patch-RFC6056}: ``In the context of the web, [counting] how many TCP connections a user's computer is establishing over time [...] allows a website to count exactly how many subresources a third party website loaded. [...] Distinguishing between different users behind a VPN based on distinct source port ranges, [...] Tracking users over time across multiple networks, [...] Covert communication channels between different browsers/browser profiles running on the same computer, [... and] Tracking what applications are running on a computer based on the pattern of how fast source ports are getting incremented''.

This attack builds on the phase 1 technique (\autoref{sec:phase1}). To mount the attack, the attacker needs to have client access to the forward proxy. For simplicity we assume that the attacker is simply one of the proxy's clients. The attacker first runs the phase 1 logic with the client side being a standalone script/software (not inside a browser) that runs on a machine that has client access to the forward proxy, and establishes $T$ attacker 3-tuples that conform to the $T$ perturbation table counters. This needs to be done only once, and ideally when the target device is relatively idle.

Next, the attacker can poll the TCP source ports $p_i$ for each of these $T$ attacker 3-tuples at time $t$. The attacker then polls the TCP source ports $p'_i$ at time $t'>t$. Denote by $\rho$ the total number of ephemeral ports in the system:  $\rho=\mathit{max\_ephemeral}-\mathit{min\_ephemeral}+1$, and suppose no counter advanced more than $\rho-1$ steps (keep in mind that the attacker's first poll also increments each counter by 1), then the total number of TCP connections established by the device between $t$ and $t'$ is:
$$\sum_{i=0}^{T-1}((p'_i-p_i-1) \mod \rho)$$

This measurement can be repeated as long as the device does not restart.

\section{Analysis of RFC 6056's Algorithm 5}
\label{app:algorithm5}
RFC 6056's Algorithm 5 increments a global counter by a random value between 1 and $N$, where $N$ is configurable. In order to avoid connection id reuse, RFC 6056's Algorithm 5 should ensure that the counter does not wrap around in less than $2\cdot \mathit{MSL}$ seconds, where $\mathit{MSL}$ is the {\em server} TCP stack parameter. The original TCP RFC 793 sets $\mathit{MSL}=120$. The default value for Windows servers (the registry value {\tt TcpTimedWaitDelay}) is 60 seconds \cite{windows-msl}. In Linux, $\mathit{MSL}=30$ seconds (evident from the kernel constant {\tt TCP\_TIMEWAIT\_LEN}$=2\cdot MSL=60$ \cite{linux-msl}). The average progress per TCP port in RFC 6056's Algorithm 5 is $\frac{N+1}{2}$, therefore in order not to wrap around before $2\cdot \mathit{MSL}$ seconds have elapsed, the following condition is necessary (but not sufficient):
$$2\cdot\mathit{MSL}\cdot\frac{N+1}{2}\cdot r<R$$
Where $R$ is the port range (for Linux/Android, $R=60999-32768+1=28232$), and $r$ is the outbound TCP connection rate. The above upper bound for $N$ is not tight, because it assumes that each connection is short lived, i.e. terminated very shortly after it is established. If a connection is long lived, then its TIME\_WAIT phase is achieved after even more ports are consumed, thus lowering the bound for $N$.
In an anecdotal test with a Linux laptop running Ubuntu 20.04.3 and Chrome 96.0.4664.110, we opened several tabs for media-rich websites and got 737 TCP connections in a 64.5 seconds time interval, thus we measured $r=11.4$ connections/s. For Linux servers, this yields $N \leq 81$, and for Windows servers, this yields $N \leq 19$. As we noted above, this is a very loose upper bound for $N$. And it is quite possible that $r$ higher than 11.4 is common in some scenarios. But even setting $N=81$ yields low security since it reduces the entropy of the TCP source port by 8.5 bits (in the Linux server case), from $\log_2{28232}=14.8$ to $\log_2{81}=6.3$ (10.6 entropy bit reduction, to 4.2 bits in the Windows case). This seems unacceptable security-wise, and so Algorithm 5 fails to deliver a practical trade-off between security and functionality. 

Interestingly, RFC 6056 suggests $N=500$ without explanation how this value is obtained.
\section{Cloud Providers Experiment}\label{sec:cloud-providers-experiment}
In this experiment, we deployed servers on multiple cloud providers across different regions and tested whether their networks rewrite TCP source ports. Our test uses a utility that contacts a reference server on a bound source port (random but known to the utility), to which the server replies with the observed source port. Our experiment shows that all 25 tested regions across 4 cloud providers (Azure, AWS, Google Cloud and Digital Ocean) do \emph{not} rewrite TCP source port. \autoref{tab:tested-cloud-providers-ipv4} summarizes the results.

\begin{table}[t]
\centering
\begin{threeparttable}

\begin{tabular}{l|l|l}
\toprule
Cloud provider                             & Region & \begin{tabular}{@{}l@{}}Preserve \\ Ports?\end{tabular}  \\ \midrule
\multirow{6}{*}{Azure} & Australia East & \checkmark \\
                       & Sweden Central & \checkmark \\
                       & Central US & \checkmark \\
                       & Central India & \checkmark \\
                       & UK West & \checkmark \\
                       & Korea Central & \checkmark \\
\midrule
\multirow{7}{*}{AWS}   & Oregon & \checkmark \\
                       & Milan & \checkmark \\
                       & Canada & \checkmark \\
                       & Sau Paulo & \checkmark \\
                       & Singapore & \checkmark \\
                       & Cape Town & \checkmark \\
                       & Hong Kong & \checkmark \\
\midrule
\multirow{7}{*}{Google Cloud}   & Las Vegas & \checkmark \\
                       & Santiago & \checkmark \\
                       & Madrid & \checkmark \\
                       & Taiwan & \checkmark \\
                       & Jakarta & \checkmark \\
                       & Melbourne & \checkmark \\
                       & Zurich & \checkmark \\
\midrule
\multirow{5}{*}{Digital Ocean} & Amsterdam & \checkmark \\
                       & Bangalore & \checkmark \\
                       & Singapore & \checkmark \\
                       & New York & \checkmark \\
                       & Toronto & \checkmark \\

\bottomrule
\end{tabular}

\end{threeparttable}
 \caption{
 Tested cloud providers and regions under IPv4
}
\label{tab:tested-cloud-providers-ipv4}
\end{table}

\end{document}
